\def\mdseries@tt{m}             
\renewcommand\footnotetextcopyrightpermission[1]{}
\newcommand{\sys} [0] {{\it Rose}}
\newcommand{\alt}{~~|~~}
\newcommand{\binopdef}     \oplus 
\newcommand{\unopdef}      \ominus 
\DeclareRobustCommand*\cal{\@fontswitch\relax\mathcal}
\def\cA{{\cal A}}
\def\cC{{\cal C}}
\def\cD{{\cal D}}
\def\cF{{\cal F}}
\def\cL{{\cal L}}
\def\cO{{\cal O}}
\def\cP{{\cal P}}
\def\cU{{\cal U}}
\def\cvp{{\varphi}}
\def\cVvp{{\vec{\cvp}}}
\newcommand{\exec} [1] {\llbracket #1 \rrbracket}
\newcommand{\tup} [1] {\langle #1 \rangle}
    \newcommand{\infral} [3] {\infer[\textsc{#3}]{\begin{array}{c} #2 \end{array} }{ \begin{array}{c} #1  \end{array} } }
        \newcommand{\eat} [1] {}
\renewcommand\section{\@startsection{section}{1}{\z@}%
                       {-8\p@ \@plus -4\p@ \@minus -4\p@}%
                       {6\p@ \@plus 4\p@ \@minus 4\p@}%
                       {\normalfont\large\bfseries\boldmath
                        \rightskip=\z@ \@plus 8em\pretolerance=10000 }}
\renewcommand\subsection{\@startsection{subsection}{2}{\z@}%
                       {-8\p@ \@plus -4\p@ \@minus -4\p@}%
                       {6\p@ \@plus 4\p@ \@minus 4\p@}%
                       {\normalfont\normalsize\bfseries\boldmath
                        \rightskip=\z@ \@plus 8em\pretolerance=10000 }}
\renewcommand\subsubsection{\@startsection{subsubsection}{3}{\z@}%
                       {-4\p@ \@plus -4\p@ \@minus -4\p@}%
                       {-1.5em \@plus -0.22em \@minus -0.1em}%
                       {\normalfont\normalsize\bfseries\boldmath}}
\begin{document}

\title[Noisy Program Synthesis using Abstractions]{Inductive Program 
Synthesis over Noisy Datasets using Abstraction Refinement Based Optimization
}         


\author{Shivam Handa}
\orcid{nnnn-nnnn-nnnn-nnnn}             
\affiliation{
  \department{EECS}              
  \institution{Massachusetts Institute of Technology}            
  \country{USA}                    
}
\email{shivam@mit.edu}          

\author{Martin Rinard}
\orcid{nnnn-nnnn-nnnn-nnnn}             
\affiliation{
  \department{EECS}             
  \institution{Massachusetts Institute of Technology}           
  \country{USA}                   
}
\email{rinard@csail.mit.edu}         

\begin{abstract}
    We present a new synthesis algorithm to solve program synthesis over noisy
datasets, i.e., data that may contain incorrect/corrupted input-output examples.
Our algorithm uses an abstraction refinement based optimization process to
synthesize programs which optimize the tradeoff between the loss over the noisy
dataset and the complexity of the synthesized program. 
The algorithm uses abstractions to divide the search space of programs into
subspaces by computing an abstract value that represents outputs for all programs
in a subspace. The abstract value allows our algorithm to compute, for each
subspace, a sound approximate lower bound of the loss over all programs in the subspace. 
It iteratively refines these abstractions to further subdivide the space into smaller 
subspaces, prune subspaces that do not contain an optimal program, and eventually
synthesize an optimal program. 

We implemented this algorithm in a tool called \sys.
We compare \sys{}~ to a current state-of-the-art noisy program synthesis
system~\cite{handa2020inductive} using the SyGuS 2018 benchmark
suite~\cite{alur2013syntax}. Our evaluation demonstrates that \sys~significantly
outperforms this previous system: 
on two noisy benchmark program synthesis problem sets drawn from the 
SyGus 2018 benchmark suite, \sys{}~ delivers speedups of up to 1587 and
81.7, with median speedups of 20.5 and 81.7. \sys{}~ also
terminates on 20 (out of 54) and 4 (out of 11) more benchmark problems
than the previous system. Both \sys{}~ and the previous system synthesize
programs that are optimal over the provided noisy data sets.  
For the majority of the problems in the benchmark
sets ($272$ out of $286$), the synthesized programs also 
produce correct outputs for all inputs in the original (unseen) noise-free
data set. These results highlight the benefits that \sys{}~ can 
deliver for effective noisy program synthesis.

\end{abstract}

\begin{CCSXML}
<ccs2012>
<concept>
<concept_id>10011007.10011006.10011008</concept_id>
<concept_desc>Software and its engineering~General programming languages</concept_desc>
<concept_significance>500</concept_significance>
</concept>
<concept>
<concept_id>10003456.10003457.10003521.10003525</concept_id>
<concept_desc>Social and professional topics~History of programming languages</concept_desc>
<concept_significance>300</concept_significance>
</concept>
</ccs2012>
\end{CCSXML}

\ccsdesc[500]{Software and its engineering~General programming languages}
\ccsdesc[300]{Social and professional topics~History of programming languages}

\keywords{Program Synthesis, Machine Learning, Noisy datasets, Abstraction
Refinement}  

\maketitle

\section{Introduction}
Program synthesis has been successfully used to synthesize programs
from examples, for domains such as string transformations~\cite{gulwani2011automating, 
singh2016transforming}, 
data wrangling~\cite{feng2017component}, data completion \cite{wang2017synthesis}, and
data structure manipulation~\cite{feser2015synthesizing, 
osera2015type, yaghmazadeh2016synthesizing}. 
In recent years, there has been interest in synthesizing programs
from input-output examples in presence of noise/corruptions~
\cite{handa2020inductive, handa2021program, raychev2016learning, peleg2020perfect}. 
This line of work aims to tackle real world datasets which contain noise and
corruptions. These techniques can synthesize the
correct programs, even in presence of substantial
noise~\cite{handa2020inductive}. 

Noisy program synthesis has been formulated as an optimization problem over the
program space and the noisy dataset~\cite{handa2020inductive, handa2021program}. 
The optimization problem is parameterized
with three functions: a {\it loss function}, which measures by how much 
a program's output differs from the output in the given noisy data set, 
a {\it complexity measure}, which measures
the complexity of a candidate program, and an {\it objective function}, which combines
these both of these scores to rank programs~\cite{handa2020inductive, handa2021program}. 
Given a search space and a set of noisy input/output pairs,
the task of the noisy program synthesis is to synthesize a program which
minimizes the objective function over the noisy input/output pairs.

Similar to noise-free program synthesis, noisy program synthesis is effectively a search problem. 
Working with noisy datasets further complicates the search: 
synthesizing a program which simply maximizes the number of input-output
examples it satisfies (is correct on) may not optimize the objective function
over the entire dataset. Moreover, recent research has demonstrated that
noisy program synthesis can often synthesize the correct program
even when all input/output examples are corrupted~\cite{handa2020inductive}.

Current solutions to noisy program synthesis, either fall into enumeration based
search techniques \cite{peleg2020perfect} or version space (finite tree
automaton) based techniques~\cite{handa2020inductive}.
Both of these techniques reduce the search space by partitioning the space of programs
based on their execution behaviour on the given inputs.
Both exploit the property that programs which produce the same output values on the
given input values will have the same loss on a given dataset. 
If we restrict our search space to a single partition (instead of all programs
within the given program space) then the simplest
program (based on the complexity measure) will be the optimal solution to the
noisy program synthesis problem. This is due to the fact that all programs
within this space have the same loss value. Therefore the program which minimizes
the complexity measure, minimizes the objective function.
If the simplest program within this partition is not the optimal program (i.e,
there exists another program in the search space which further reduces the value
of the objective function) then we can safely conclude no other program
within this partition is the optimal program. 
With this knowledge, iterating over all partitions and  
comparing their simplest program allows these techniques to synthesize the
optimal program.
The performance of these techniques is determined by the number of partitions
that are created, given a dataset.

This partitioning approach has also been used in traditional noise-free
program synthesis settings~\cite{gulwani2011automating, wang2017synthesis} and
horn-clause verification~\cite{kafle2015tree}. A potential solution to
reducing the number of partitions was proposed by Kafle at al. in the context of
horn clause verification using tree automata~\cite{kafle2015tree}.
Wang et al introduced this technique to noise-free
program synthesis~\cite{wang2017program}. The technique uses {\it abstract} output values 
to partition the partition the program space, instead of {\it concrete} output values.
An {\it abstract} value is a compact representation of
a set of {\it concrete} values. Both of these techniques associate partitions of 
their search space to abstract values.
For example, \cite{wang2017program} represents a partition with
an array of {\it abstract} output values and contains all programs
which, given input values, maps these inputs to an array of 
{\it concrete} outputs, where each 
output value is an element of the corresponding {\it abstract} output value.
Abstract values allow these techniques to reduce the number of partitions
and hence decrease the running time of the synthesis algorithm. 

Our work applies the {\it abstract} value based partitioning approach
to create an abstraction refinement based algorithm for solving
the noisy program synthesis problem. Our technique uses abstractions
to partition the program space, with the abstract value of each
partition enabling us to soundly and approximately 
estimate a lower bound on the loss value of a partition, i.e., 
the minimum loss of any program within that partition. 
This lower bound on the loss value, attached with the simplest program within a given
partition, allows us to effectively synthesize candidate optimal programs. 
Given a candidate optimal program, our technique can effectively prune out
partitions, which even with the minimum possible loss value, will fail to
contain the optimal program.
The remaining partitions, based on their lower bound loss value, may contain
programs which {\it better fit} the noisy dataset compared to our candidate
optimal program (i.e., the remaining partitions may contain programs that
the object function ranks above the candidate optimal program). 
The synthesis algorithm then refines these abstractions in order to further 
refine the abstraction-based partitioning and improve its estimate of the
minimum possible loss value.  
The algorithm {\bf guarantees} that it will eventually synthesize the optimal program.

We have implemented our algorithm in the \sys{}~ synthesis tool. 
\sys~can be instantiated to work in different domains by providing suitable domain
specific languages, abstract semantics, and concrete semantics 
of functions within the language. \sys~ is parameterized over 
a large class of objective functions, loss functions,
and complexity measures. \cite{handa2020inductive} highlights that this
flexibility is required to synthesize correct programs for datasets
which contain a large amount of noise. 

We \sys{} to a current state-of-the-art noisy program synthesis
system~\cite{handa2020inductive} using the SyGuS 2018 benchmark
suite~\cite{alur2013syntax}. Our evaluation demonstrates that \sys~significantly
outperforms this previous system: 
on two noisy benchmark program synthesis problem sets drawn from the 
SyGus 2018 benchmark suite, \sys{} delivers speedups of up to 1587 and
81.7, with median speedups of 20.5 and 81.7. \sys{}~ also
terminates on 20 (out of 54) and 4 (out of 11) more benchmark problems
than the previous system. Both systems synthesize programs that are
optimal over the provided noisy data sets. 
For the majority of the problems in the benchmark
sets ($272$ out of $286$), both systems also synthesize programs
that produce correct outputs for all inputs in the original (unseen) noise-free
data set. These results highlight the significant benefits that \sys{}~ can 
deliver for effective noisy program synthesis.

\noindent{\bf Contributions:} This paper makes the following key contributions:
\begin{itemize}
\item {\bf New Abstraction Technique:} It presents a new program synthesis technique for
synthesizing programs over noisy datasets. This
technique uses the abstract semantics of DSL constructs to 
        partition the program search space. For each partition, 
        the technique uses the abstract semantics to compute an
        abstract value representing the outputs of all programs in that
        partition. The abstract value also allows the technique to 
soundly estimate the minimum possible loss value over programs all programs in each partition. 

    \item {\bf New Refinement Technique:} It presents a new refinement technique that works
      with the sound approximation of the minimum loss values to refine the current
      partition, then discard partitions that cannot possibly contain the optimal 
      program. Iteratively applying this refinement technique delivers 
      the program with the optimal loss function over the given input/output examples. 

    \item {\bf \sys{}~Evaluation}: 
       It presents the \sys{} synthesis system, which implements the new abstraction
       and refinement techniques presented in this paper. Our experimental evaluation
       shows that \sys{}~ delivers substantial
       performance improvements over a current state of the art noisy program synthesis
       system~\cite{handa2020inductive} (Section~\ref{sec:results}). In comparison 
       with this previous system, these performance improvements result in substantially 
       smaller overall program synthesis times and many fewer synthesis timeouts. 
\end{itemize}

\section{Preliminaries}\label{sec:preliminaries}
We first review the noisy program synthesis framework (introduced
by~\cite{handa2020inductive, handa2021program}), the concepts associated with this framework, and
the conditions that qualify a program to be the correct solution to a synthesis
problem.
We also discuss the 
tree automata based noisy program synthesis technique proposed
by~\cite{handa2020inductive}.

\subsection{Finite Tree Automata}
{\it Finite Tree Automata} are a type of state machine which accept 
trees rather than strings. They generalize standard finite automata to 
describe a regular language over trees.

\begin{definition}[\bf FTA]
    A (bottom-up) Finite Tree Automaton (FTA) over alphabet $F$ is a tuple
    $\cA = (Q, F, Q_f, \Delta)$ where $Q$ is a set of states, $Q_f
    \subseteq Q$ is the set of accepting states and $\Delta$ is a set of
    transitions of the form $f(q_1, \ldots, q_k) \rightarrow q$ where 
    $q, q_1, \ldots q_k$ are states, $f \in F$.
\end{definition}

Every symbol $f$ in the set $F$ has an associated arity. The set 
$F_k \subseteq F$ is the set of all $k$-arity symbols in $F$.
$0$-arity terms $t$ in $F$ are viewed as single node trees (leaves of trees).
$t$ is accepted by an FTA if we can rewrite $t$ to some state $q \in Q_f$ using
rules in $\Delta$. We use the notation $\cL(\cA)$ to denote the language of an
FTA $\cA$, i.e., the set of all trees accepted by $\cA$.

\begin{example}
    Consider the tree automaton $\cA$ defined by states $Q = \{q_{T},
    q_{F}\}$, $F_0 = \{\mathsf{True}, \mathsf{False}\}$, 
    $F_1 = \mathsf{not}$, $F_2 = \{\mathsf{and}\}$, 
    final states $Q_f = \{q_{T}\}$ and the following transition rules $\Delta$:
    \[
        \begin{array}{cccc}
            \mathsf{True} \rightarrow q_T 
            &
            \mathsf{False} \rightarrow q_F
            &
            \mathsf{not}(q_T) \rightarrow q_F
            &
            \mathsf{not}(q_F) \rightarrow q_T
            \\
            \mathsf{and}(q_T, q_T) \rightarrow q_T
            &
            \mathsf{and}(q_F, q_T) \rightarrow q_F
            &
            \mathsf{and}(q_T, q_F) \rightarrow q_F
            &
            \mathsf{and}(q_F, q_F) \rightarrow q_F
            \\
            \mathsf{or}(q_T, q_T) \rightarrow q_T
            &
            \mathsf{or}(q_F, q_T) \rightarrow q_T
            &
            \mathsf{or}(q_T, q_F) \rightarrow q_T
            &
            \mathsf{or}(q_F, q_F) \rightarrow q_F
            \\
        \end{array}
    \]
\end{example}

The above tree automaton accepts all propositional logic formulas
over $\mathsf{True}$ and $\mathsf{False}$
which evaluate to $\mathsf{True}$. 
\eat{
Figure~\ref{fig:formula_tree} presents the tree for the formula
$\mathsf{and}(\mathsf{True}, \mathsf{not}(\mathsf{False}))$. 

\begin{figure}
    \begin{tikzpicture}[shorten >=1pt,node distance=1.25cm,on grid]
        \node[state]   (q1)                {$\mathsf{and}$};
        \node[state]           (q2) [above right=of q1] {$\mathsf{True}$};
        \node[state] (q3) [below right=of q1] {$\mathsf{not}$};
        \node[state] (q4) [right=of q3] {$\mathsf{False}$};
        \path[->] (q2) edge                node [] {} (q1)
                    (q3) edge node [] {} (q1)
                    (q4) edge node [] {} (q3);
\end{tikzpicture}
    \caption{Tree for formula $\mathsf{and}(\mathsf{True}, \mathsf{not}(\mathsf{False}))$}
    \label{fig:formula_tree}
\end{figure}
}

\subsection{Domain Specific Languages (DSLs)}
\label{sec:dsl} 
The noisy program synthesis framework uses a Domain Specific Language (DSL) to 
specify to the set of programs under consideration. Without loss of generality, we
assume all programs $p$ are of the form $\lambda~x.e$ (i.e., they take a single
input $x$), where $e$ is a parse trees in a Context Free Grammar $G$. 
The internal nodes of this parse tree represent function and the leaves
represent constants or variable $x$.
$\llbracket p \rrbracket x_j$ denotes the
output of $p$ on input value $x_j$ ($\llbracket . \rrbracket$ is
defined in Figure~\ref{fig:exec_sem}).

\begin{figure}
    \[
        \begin{array}{c}
            \begin{array}{cc}
            \infral{
                t \in T_C
            }
            {\llbracket t \rrbracket x_j \Rightarrow v_t}
            {(Constant)}
            &
            \infral{}
            {\llbracket x \rrbracket x_j \Rightarrow x_j}
            {(Variable)}
            \end{array}
            \\
            \\
            \infral{
            \llbracket e_1 \rrbracket x_j \Rightarrow v_1 ~~~~~ 
            \llbracket e_2 \rrbracket x_j \Rightarrow v_2 ~~~~ \ldots ~~~~
            \llbracket e_k \rrbracket x_j \Rightarrow v_k
            }
            {\llbracket f(e_1, e_2, \ldots e_k) \rrbracket x_j  \Rightarrow
            f(v_1, v_2, \ldots v_k)}{(Function)}
    \end{array}
    \]
    \caption{Execution semantics for program $p$}
    \label{fig:exec_sem}
\end{figure}

All valid programs (which can be executed) are defined by a DSL 
grammar $G = (T, N, P, s_0)$ where:
\begin{itemize}[leftmargin=*]
    \item $T$ is a set of terminal symbols.
        These include 
        constants and input variable $x$.
        We use the notation $T_C$ to denote the set of constants in $T$.
    \item $N$ is the set of non-terminals that represent subexpressions in our
        DSL.
    \item $P$ is the set of production rules of the form $s \rightarrow f(s_1,
        \ldots, s_n)$, where $f$ is a built-in function in the DSL and 
        $s, s_1, \ldots, s_n$ are non-terminals in the grammar.
    \item $s_0 \in N$ is the start non-terminal in the grammar.
\end{itemize}

We assume that we are given a black box implementation of 
each built-in function $f$
in the DSL.  In general, all techniques explored within 
this paper can be generalized to
any DSL which can be specified within the above framework.

\begin{example}\label{ex:dsl}
    The following DSL defines expressions over input x, 
    constants 2 and 3, and addition and multiplication:
    \[
        \begin{array}{rcl}
            n &:=& x \alt n + t \alt n \times t; \\
            t &:=& 2 \alt 3;
        \end{array}
    \]
\end{example}

We use the notation $p[x]$ to denote $\exec{p}x$. Given a vector of input values
$\vec{x} = \tup{x_1, \ldots x_n}$, we use the notation $p[\vec{x}]$ to denote
the vector $\tup{p[x_1], \ldots p[x_n]}$.

\subsection{Loss Functions}\label{subsec:lossFunction}
Given a noisy input-output example $(x, y)$ 
and a program $p$, a {\bf Loss Function}
$L(p[x], y)$ measures how incorrect the program is with respect to the given
example. The loss function only depends on the noisy output $y$ and the output
of the program $p$ on the input $x$.
Given a noisy dataset $\cD = (\vec{x}, \vec{y})$ (where $\vec{x} = \tup{x_1,
\ldots x_n}$ and $\vec{y} = \tup{y_1, \ldots y_n}$), we use the notation
$L(p[\vec{x}], \vec{y})$ to denote the sum of loss of program $p$ on individual
corrupted input-output example, i.e.,
\[
    L(p[\vec{x}], \vec{y}) = \sum\limits_{i=1}^n L(p[x_i], y_i)
\]

\begin{definition}
{\bf $0/1$ Loss Function:} The $0/1$ loss function 
    $L_{0/1}(p[x], y)$ returns $0$, if $p$ agrees with an input-output example
    $(x, y)$, else it returns $1$:
   \[
        L_{0/1}(p[x], y) = 
        1 
        ~~\mathrm{if}~~ (y \neq p[x]) ~~\mathrm{else}~~ 0
    \]
    Given a noisy dataset $\cD$, $L_{0/1}$
    counts the number of
    input-output examples where $p$ does not agree within the data set. 
\end{definition}

\begin{definition}
{\bf $0/\infty$ Loss Function:} The $0/\infty$ loss function 
    $L_{0/\infty}(p[x], y)$ is 0
    if $p$ matches the output $y$ on input $x$ and $\infty$ otherwise:
    \[
        L_{0/\infty}(p[x], y) = 
        0 ~~\mathrm{if}~~ (y = p[x]) ~~\mathrm{else}~~~ \infty
    \]
\end{definition}

\begin{definition}
{\bf Damerau-Levenshtein (DL) Loss Function:} The DL loss function 
    $L_{DL}(p[x], y)$ uses the {\it Damerau-Levenshtein} 
    metric~\cite{damerau1964technique}
to measure the distance between the output from the synthesized program and the
corresponding output in the noisy data set: 
\[
    L_{DL}(p[x], y) = 
    L_{p[x], y}\big(\left\vert p[x]\right\vert, \left\vert y \right\vert\big)
\]
where, $L_{a, b}(i, j)$ is the {\it Damerau-Levenshtein} metric~\cite{damerau1964technique}.
\end{definition}
    This distance metric counts the number of 
single character deletions, insertions, substitutions, or transpositions
required to convert one input string into another.
We use this loss function to work with input-output examples containing
human-provided text, 
as more than $80\%$ of all human misspellings 
are reported to be captured by a single one of these four 
operations~\cite{damerau1964technique}.

\subsection{Complexity Measure}
\label{subsec:complexityMeasure}
Given a program $p$, a {\bf Complexity Measure} $C(p)$ 
scores a program $p$ based on their complexity.
This score is independent of the noisy dataset $\cD$.
Within the noisy program synthesis framework, complexity measure 
is used to trade off performance on 
the noisy data set vs. complexity
of the synthesized program.  
Formally, a complexity measure $C(p)$ maps each program $p$ to a real number.

As standard in noisy program synthesis literature~\cite{handa2020inductive}, we
work with complexity measure of the form $\mathrm{Cost}(p)$, which
computes the complexity of given program $p$ represented as a parse 
tree recursively as follows:
\[
\begin{array}{rcl}
\mathrm{Cost}(t) &=& \mathrm{cost}(t)\\
\mathrm{Cost}(f(e_1, e_2, \ldots e_k))  &=& \mathrm{cost}(f) + 
    \sum\limits_{i = 1}^k \mathrm{Cost}(e_i) 
\end{array}
\]
where $t$ and $f$ are terminals and  built-in functions in our DSL respectively.
A popular example of a complexity measure, expressed in this form, is
$\mathsf{Size}(p)$, which assigns the cost of a terminal and function as $1$ 
($\mathrm{cost}(t) = \mathrm{cost}(f) = 1$).

Given an FTA $\cA$, we can synthesize the minimum complexity  parse tree (as
measured by complexity measures of the form $\mathsf{Cost}(p)$) accepted by
$\cA$ using a dynamic programming based algorithm proposed
by~\cite{gallo1993directed} (see Section $6$ in~\cite{wang2017program} for more details).

\subsection{Objective Functions}
An {\bf Objective Function}, within the noisy program synthesis framework,
combines the complexity score of a program and it's loss over a noisy dataset,
to be a combined score. The objective function is used to compare different
programs in the DSL and synthesize the program which {\it best-fits} the noisy
dataset. Formally,
an objective function $U$ maps loss and complexity tuples $\tup{l, c}$ to
a totally ordered set, such that, for all $l$ and $c$, $U(\tup{l, c})$ is
{\bf monotonically non-decreasing} with respect to $c$ and $l$. 
\begin{definition}
{\bf Tradeoff Objective Function:}
    Given a tradeoff parameter $\lambda > 0$, 
    the tradeoff objective function $U_{\lambda}(\tup{l, c}) = l + \lambda c$.
    This objective function allows us to trade-off between complexity and loss,
    directing the synthesis algorithm to search for a simpler program, in lieu
    of increasing the loss over the dataset.
\end{definition}

\begin{definition}
{\bf Lexicographic Objective Function:}
   A lexicographic objective function
    $U_L(\tup{l,c}) = \tup{l,c}$ maps $l$ and $c$ 
   into a lexicographically ordered space, i.e., $\tup{l_1, c_1} < \tup{l_1, c_2}$ if and only
   if either $l_1 < l_2$ or $l_1 = l_2$ and $c_1 < c_2$.
This objective function first minimizes the loss, then the complexity. 
\end{definition}

\noindent We will use the
notation $\tup{l_1, c_1} \leq_U \tup{l_2, c_2}$ to denote $U(\tup{l_1, c_1})
\leq U(\tup{l_2, c_2})$.

\subsection{Program Synthesis over Noisy Datasets}
Given a noisy dataset  $\cD = (\vec{x}, \vec{y})$, the noisy programs synthesis
aims to find a program $p$ which {\it best-fits} the dataset $\cD$, where the
{\it best-fit} is defined by the loss function $L$, complexity measure $C$, and
objective function $U$. Formally,
given a DSL $G$, we wish to find a program
$p^* \in G$ which minimizes the objective function $U$ (parameterized by $L$ and
$C$), i.e.,  
        \[
        p^* \in \mathsf{argmin}_{p \in G} U(L(p[\vec{x}], \vec{y}), C(p))
\]

The above condition is equivalent to:
\[
    \forall~p \in G.~\tup{ L(p^*[\vec{x}],\vec{y}), C(p)} \leq_U 
    \tup{ L(p[\vec{x}], \vec{y}), C(p)}  
\]

\noindent{\bf Concrete Finite Tree Automata:}
The noisy synthesis algorithm introduced by~\cite{handa2020inductive} builds upon the concept
of a Concrete Finite Tree Automaton (CFTA). Given a domain specific language $G$
and inputs $\vec{x}$, a Concrete Finite Tree Automaton is a tree automaton which
accepts all abstract syntax trees representing DSL programs. 
A CFTA is
constructed using the rules in Figure~\ref{fig:rulesfta}.
The states of the
CFTA is of the form $q^{\vec{v}}_{s}$, where $s$ is a symbol in $G$ and
$\vec{v}$ is vector of values. The existence of a state $q^{\vec{v}}_s$ implies
there exists a sub-expression in $G$, starting from symbol $s$, which maps
inputs $\vec{x}$ to output $\vec{v}$.
There exists a transition $f(q^{\vec{v}_1}_{s_1}, \ldots q^{\vec{v}_k}_{s_k})
\rightarrow  q^{\vec{v}}_s$
in the CFTA, only if,
for all $i = [1, |\vec{v}|]. f(v_{1i}, \ldots v_{ki}) = v_i$.

The Var Rule states that if we have an input symbol $x$, we construct a state
$q^{\vec{x}}_x \in Q$, where $\vec{x}$ is the input vector. 
The Const Rule states that for any constant (non variable terminal), we
construct a state 
$q^{\vec{v}}_t \in Q$, where $\vec{v}$ is a vector of size equal to $\vec{x}$
with each entry as $\exec{t}$.
The Final Rule states that, given a start symbol $s_0$, all states with symbol
$s_0$ are added to $Q_f$. The Prod Rule states that if we have a production $s
\rightarrow f(s_1, \ldots s_n) \in P$, and there exists states 
$q^{\vec{v}_1}_{s_1}, \ldots q^{\vec{v}_k}_{s_k} \in Q$, then we add a state
$q^{\vec{v}}_s \in Q$, where
for all $i = [1, |\vec{v}|]. f(v_{1i}, \ldots v_{ki}) = v_i$.
We also add a transition 
$f(q^{\vec{v}_1}_{s_1}, \ldots q^{\vec{v}_k}_{s_k})
\rightarrow  q^{\vec{v}}_s$ in the transition set $\Delta$.

Given a CFTA $(Q, Q_f, \Delta)$ (constructed using rules in
Figure~\ref{fig:rulesfta}) and a state
$q^{\vec{v}}_{s_0} \in Q_f$, a tree representing program $p$ is accepted by
automaton $(Q, \{q^{\vec{v}}_{s_0}\}, \Delta)$ if and only if $p[\vec{x}] =
     \vec{v}$ (\cite{handa2020inductive}).

\begin{figure}
\[
\begin{array}{c}
\begin{array}{ccc}
\infral{
    \vec{v}= \tup{
x_1, \ldots x_n
        }
}
{
    q^{\vec{v}}_x \in Q
}
{(Var)}
    &
\infral{
    q^{\vec{v}}_{s_0} \in Q
}
    {q^{\vec{v}}_{s_0} \in Q_f}
{(Final)} 
    &
    \infral{
        t \in T_C, \vec{v} = \tup{
            \exec{t}, \ldots \exec{t}
        }, |\vec{v}| = n
    }
    {
        q^{\vec{v}}_t \in Q
    }
    {(Const)}
\end{array}
    \\
\\
    \infral{
    s \rightarrow f(s_1, \ldots, s_k) \in P, q^{\vec{v}_1}_{s_1}, \ldots,
    q^{\vec{v}_k}_{s_k} \in Q,  
    v_j = \exec{f(\vec{v}_{1j}, \ldots, \vec{v}_{kj})},
        \vec{v} = \tup{v_1, \ldots v_n}
}
    {q^{\vec{v}}_s \in Q, f(q^{\vec{v}_1}_{s_1}, \ldots, q^{\vec{v}_k}_{s_k}) \rightarrow
    q^{\vec{v}}_s \in \Delta}
{(Prod)}
\end{array}
\]
    \caption{Rules for constructing FTA $\cA = (Q, Q_f, \Delta)$ for inputs
    $\vec{x} = \tup{x_1, \ldots x_n}$.}
\label{fig:rulesfta}
\end{figure}

In general, the rules in Figure~\ref{fig:rulesfta} may result in a FTA which has
infinitely many states. \cite{handa2020inductive} handles this by only adding a new state
within the constructed FTA if the height of the smallest tree it will accept is 
less than some threshold height.

\noindent{\bf Synthesis Algorithm:} Figure~\ref{alg:oldalgorithm} 
presents a simplified
version of the noisy synthesis algorithm presented in \cite{handa2020inductive}.
The algorithm, given a noisy dataset $\cD = (\vec{x}, \vec{y})$, a DSL $G$, 
     an objective function $U$, a loss function $L$, and a
complexity measure $C$, synthesizes a program which minimizes the objective
function, i.e., the program $p^*$ returned by the algorithm satisfies the
following constraint:
\[
    p^* \in \mathsf{argmin}_{p \in G} U(L(p[\vec{x}], \vec{y}), C(p))
\]
The algorithm first constructs a Concrete Finite Tree Automaton (line 2) based
on the rules presented in Figure~\ref{fig:rulesfta}. Given this Concrete Finite
Tree Automaton $(Q, Q_f, \Delta)$, 
the algorithm finds the least complex program (i.e., program
which minimizes the complexity measure) for each accepting state $q \in Q_f$
(line 3-4). Given an accepting state of the form $q^{\vec{v}}_{s_0}$, a program
$p \in G$ is accepted by the automaton $(Q, \{q^{\vec{v}}_{s_0}\}, \Delta)$ 
if and only if $p[\vec{x}] = \vec{v}$.
Given an accepting $q^{\vec{v}}_{s_0}$, $P[q^{\vec{v}}_{s_0}]$ is the least
complex program which maps input vector $\vec{x}$ to outputs $\vec{v}$, i.e.,
\[
    P[q^{\vec{v}}_{s_0}] \in \mathsf{argmin}_{p \in G[\vec{x} \rightarrow
    \vec{v}]} C(p)
\]
where
$
   G[\vec{x} \rightarrow \vec{v}] = \{p \vert p \in G, p[\vec{x}] = \vec{v}\}
$.

The algorithm then finds an accepting state $q^{\vec{v}^*}_{s_0} \in Q_f$, 
such that,
for all accepting states $q^{\vec{v}}_{s_0} \in Q_f$,
\[
   \tup{L(\vec{v}^*, \vec{y}), C(P[q^{\vec{v}^*}_{s_0}])}
\leq_U
   \tup{L(\vec{v}, \vec{y}), C(P[q^{\vec{v}}_{s_0}])}
\]
Since, for all programs $p \in G$, there exists an accepting state
$q^{\vec{v}}_{s_0}$, such that, $p$ is accepted by 
$(Q, \{q^{\vec{v}}_{s_0}\}, \Delta)$ (\cite{handa2020inductive}),
the following statement is true:
\[
       \tup{L(\vec{v}^*, \vec{y}), C(P[q^{\vec{v}^*}_{s_0}])}
       \leq_U
   \tup{L(\vec{v}, \vec{y}), C(P[q^{\vec{v}}_{s_0}])}
\leq_U
   \tup{L(\vec{v}, \vec{y}), C(p)}
\]
Therefore, 
\[
    P[q^{\vec{v}^*}_{s_0}] \in \mathsf{argmin}_{p \in G} U(L(p[\vec{x}],
    \vec{y}), C(p))
\]

\begin{figure}[tb]
\begin{algorithmic}[1]
    \Procedure{Synthesize}{$\cD, G, U, L, C$}
    \Statex {\bf input:} 
    Noisy Dataset $\cD = (\vec{x}, \vec{y})$, DSL $G$.
\Statex{\bf input:} 
    Objective Function $U$, Loss Function $L$, and Complexity metric $C$.
    \Statex {\bf output:} A program $p^*$, such that, $\forall~p \in G.
\tup{L(p^*[\vec{x}], \vec{y}), C(p^*)} \leq_U  \tup{L(p[\vec{x}], \vec{y}),
C(p)}$.
\State $(Q, Q_f, \Delta) := \mathsf{ConstructFTA}(\vec{x}, G)$; 
\For{$q \in Q_f$} 
\State $P[q] := \mathsf{LeastComplex}_C((Q, \{q\}, \Delta))$; 
\Statex{\Comment{Least complex program for a given accepting state.}
}
\EndFor
\State $q^* := \mathsf{null}$; $\vec{v}^* := \mathsf{null};$
\For{$q^{\vec{v}}_{s_0} \in Q_f$}
\If{$q^* = \mathsf{null}$ or $\tup{L(\vec{v}^*, \vec{y}), C(P[q^*])} 
\leq_U \tup{L(\vec{v}, \vec{y}), C(P[q^{\vec{v}}_{s_0}])}$
}
\State $\vec{v}^* = \vec{v}$; $q^* := q^{\vec{v}}_{s_0}$;
\EndIf
\EndFor
\Statex{\Comment{$q^*$ is the accepting state which accepts a program which
minimizes the Objective Function.}}
\State {\bf return} $P[q^*]$;
\EndProcedure
\end{algorithmic}
\caption{Algorithm for noisy program synthesis using Finite Tree Automaton.}
\label{alg:oldalgorithm}
\end{figure}

\section{Noisy Program Synthesis using Abstraction Refinement Based Optimization}
We next introduce our synthesis algorithm which builds upon the concept of a
Finite Tree Automaton which uses abstract values instead of concrete 
values, as used in Figure~\ref{alg:oldalgorithm}.
The algorithm then uses an 
abstraction refinement based optimization technique to direct the algorithm 
towards the program which optimizes the objective function.

\subsection{Abstractions}\label{subsec:abstractions}
We construct an abstract version of the Finite Tree Automaton by associating 
abstract values with each symbol. 
We assume that the abstract values are
represented as 
conjunctions of predicates of the form $f(s)~\mathsf{op}~c$, where $s$ is a
symbol in the given DSL, $f$ is a function, $\mathsf{op}$ is an operator, and
$c$ is a constant.

\noindent{\bf Universe of predicates:} 
Given a DSL, our algorithm is parameterized by 
a universe $\cU$ of predicates, that our algorithm uses to construct
abstractions for our synthesis algorithm. 
The universe $\cU$ is specified using a 
family of function $\cF$, a set of operators $\cO$, and a set of constants
$\cC$, such that, all predicate in the universe $\cU$ can be written as
$f(s)~\mathsf{op}~c$, where $f \in \cF$, $\mathsf{op} \in \cO$, $c \in \cC$,
and $s$ is a symbol in the DSL (except predicates $\mathsf{true}$ and
$\mathsf{false}$).
We assume that $\cF$ contains the identity function, $\cO$ contains equality,
and $\cC$ includes the set of all values that can be computed by any
sub-expression within the DSL $G$. 

\noindent{\bf Notation:} Given predicates $\cP \subseteq \cU$ and an abstract
value $\cvp \in \cU$, we use $\alpha^\cP(\cvp)$ to denote the strongest
conjunctions of predicates in $\cP$, such that $\cvp \implies \alpha^\cP(\cvp)$.
Given a vector of abstract values $\cVvp = \tup{\cvp_1, \ldots \cvp_n}$, 
$\alpha^\cP(\cVvp)$ denotes the vector $\tup{\alpha^\cP(\cvp_1), \ldots
\alpha^\cP(\cvp_n)}$. 
As standard in abstract interpretation literature~\cite{cousot1977abstract},
we use the notation $\gamma(\cvp)$ to denote the set
of concrete values represented by the abstract value $\cvp$.

\noindent{\bf Abstract semantics:} In addition to concrete semantics for each
DSL construct, we are given
abstract semantics of each DSL construct in the form of symbolic
post-conditions over the universe of predicates $\cU$. Given a production $s
\rightarrow f(s_1, \ldots, s_n)$, we use the notation $\exec{f(\cvp_1, \ldots
\cvp_k)}^\#$ to represent the abstract semantics of function $f$, i.e.,
$\exec{f(\cvp_1, \ldots, \cvp_k)}^\# = \cvp$ if the function $f$ returns $\cvp$
(for symbol $s$),
given abstract values $\cvp_1, \ldots, \cvp_k$ for arguments $s_1, \ldots, s_k$. 
We assume that the abstract semantics are {\bf sound}, i.e.,
\[
    \exec{f(\cvp_1, \ldots, \cvp_k)}^\# = \cvp~\mathsf{and}~v_1 \in
    \gamma(\cvp_1), \ldots v_k \in \gamma(\cvp_k) \implies \exec{f(v_1, \ldots
    v_k)}
    \in \gamma(\cvp)
\]
However, we do not require the abstract semantics to be {\bf
precise}, i.e., formally:
\[
    \exec{f(\cvp_1, \ldots, \cvp_k)}^\# = \cvp, \text{there may exist a}~v \in \cvp, s.t., 
     \nexists~v_1 \in \cvp_1, \ldots v_k \in \cvp_k. \exec{f(v_1, \ldots v_k)} =
     v
\]
There may exist concrete value $v$ in the abstract output $\cvp$, such that, no
concrete input parameters $v_1, \ldots v_k$ in the abstract inputs $\cvp_1,
\ldots \cvp_k$ exist, for which $f(v_1, \dots v_k) = v$.

But we require the abstract semantics to be precise if all of the input
parameters are abstract values representing a single concrete value, i.e., 
they are abstract
values of the form $s = v$.
Formally:
\[
    \exec{f(s_1 = v_1, \ldots s_k = v_k)}^\# = (s = \exec{f(v_1, \ldots v_k)})
\]
Given a program $p$, predicates $\cP$, and input $x_j$, $\exec{p}^\cP x_j$
denotes the abstract value of program $p$, if the intermediate computed values
are only represented via predicates in $\cP$.
Figure~\ref{fig:execabstract} presents the precise rules for computing
$\exec{p}^\cP x_j$.
Given inputs $\vec{x} = \tup{x_1, \ldots x_n}$, we use the notation
$\exec{p}^\cP \vec{x}$ to denote the vector $\cVvp = \tup{\exec{p}^\cP x_1,
\ldots \exec{p}^\cP x_n}$.

\begin{figure}
    \[
        \begin{array}{c}
            \begin{array}{cc}
            \infral{
                t \in T_C
            }
                { \llbracket t \rrbracket^\cP x_j \Rightarrow \alpha^\cP(t =
                \exec{t}x_j)}
            {(Constant)}
            &
            \infral{}
                {\llbracket x \rrbracket^\cP x_j \Rightarrow 
                \alpha^\cP(x = x_j)}
            {(Variable)}
            \end{array}
            \\
            \\
            \infral{
            \llbracket e_1 \rrbracket x_j \Rightarrow \cvp_1 ~~~~~ 
            \llbracket e_2 \rrbracket x_j \Rightarrow \cvp_2 ~~~~ \ldots ~~~~
            \llbracket e_k \rrbracket x_j \Rightarrow \cvp_k
            }
            {\llbracket f(e_1, e_2, \ldots e_k) \rrbracket x_j  \Rightarrow
            \alpha^\cP(\exec{f(\cvp_1, \cvp_2, \ldots \cvp_k)}^\#)}{(Function)}
    \end{array}
    \]
    \caption{Abstract Execution semantics for program $p$.}
    \label{fig:execabstract}
\end{figure}

\noindent{\bf Abstract Loss Function:} 
Given a loss function, the abstract semantics of a loss function allows us to 
find the minimum possible loss value for a given abstract value, i.e., 
given a loss function $L$, noisy output $y$,
and an abstract value $\cvp$:
\[
    L(\cvp, y) = \min\limits_{z \in \gamma(\cvp)} L(z, y)
\]

\subsection{Abstract Finite Tree Automaton}
An Abstract Finite Tree Automata (AFTA) generalizes an Concrete Finite Tree
Automaton by replacing concrete values by abstract values while constructing
automaton states. This allows us to compress the size of an FTA, as multiple
states with concrete values can be represented
by a single
state with abstract value.

Given predicates $\cP$, DSL $G$, and
inputs $\vec{x}$, Figure~\ref{fig:rulesafta} presents the rules for constructing
an AFTA $(Q, Q_f, \Delta)$.
States in an AFTA are of the form $q^{\cVvp}_{s}$, where $s$ is a symbol and
$\cVvp$ is a vector of abstract values. 
If $q^{\cVvp}_s \in Q$, then there exists an expression $e$, starting from
symbol $s$, such that $\exec{e}^\cP\vec{x} = \cVvp$.
If there is a transition $f(q^{\cVvp_1}_{s_1}, \ldots q^{\cVvp_k}_{s_k})
\rightarrow q^\cVvp_s$ in the AFTA then
\[
    \forall j = [1, |\cVvp|]. \exec{f(\cvp_{1j},\ldots \cvp_{kj})}^\# \implies
    \cvp_j\]
The Var Rule states constructs a state
$q^{\cVvp}_x \in Q$ for variable symbol $x$, where 
$\cVvp = \alpha^\cP((x = x_1), \ldots (x = x_n))$.
The Const Rule constructs a state 
$q^{\cVvp}_t \in Q$ for each constant terminal $t$, 
where $\cVvp$ is a vector of size equal to $\vec{x}$
with each entry as $\alpha^\cP(t = \exec{t})$.
The Final Rule adds all states with symbol
$s_0$ are added to $Q_f$, where $s_0$ is the start symbol. The Prod Rule 
constructs a state
$q^{\cVvp}_s \in Q$, if there exists a production $s
\rightarrow f(s_1, \ldots s_n) \in P$ and there exists states 
$q^{\cVvp_1}_{s_1}, \ldots q^{\cVvp_k}_{s_k} \in Q$ (where
$\forall i = [1, |\vec{v}|]. f(\cvp_{1i}, \ldots \cvp_{ki}) \implies \cvp_i$).
We also add a transition 
$f(q^{\cVvp_1}_{s_1}, \ldots q^{\cVvp_k}_{s_k})
\rightarrow  q^{\cVvp}_s$ in $\Delta$.

\begin{figure}
\[
\begin{array}{c}
\begin{array}{cc}
\infral{
    \cVvp = \alpha^\cP\big(\tup{
x = x_1, \ldots x = x_n
        }\big)
}
{
    q^{\cVvp}_x \in Q
}
{(Var)}
    &
\infral{
    q^{\vec{\cvp}}_{s_0} \in Q
}
    {q^{\vec{\cvp}}_{s_0} \in Q_f}
{(Final)} 
\end{array}
\\
\\
    \infral{
        t \in T_C, \cVvp = \alpha^\cP\big(\tup{
            t = \exec{t}, \ldots t = \exec{t}
        }\big), |\cVvp| = n
    }
    {
        q^\cVvp_t \in Q
    }
    {(Const)}
    \\
\\
    \infral{
    s \rightarrow f(s_1, \ldots, s_k) \in P, q^{\cVvp_1}_{s_1}, \ldots,
    q^{\cVvp_k}_{s_k} \in Q,  
    \cvp_j = \alpha^\cP\big(\exec{f(\cVvp_{1j}, \ldots, \cVvp_{kj})}^\#\big),
        \cVvp = \tup{\cvp_1, \ldots \cvp_n}
}
    {q^{\cVvp}_s \in Q, f(q^{\cVvp_1}_{s_1}, \ldots, q^{\cVvp_k}_{s_k}) \rightarrow
    q^{\cVvp}_s \in \Delta}
{(Prod)}
\end{array}
\]
    \caption{Rules for constructing FTA $\cA = (Q, F, Q_f, \Delta)$ with
    Abstract Values, for inputs
    $\vec{x} = \tup{x_1, \ldots x_n}$.}
\label{fig:rulesafta}
\end{figure}

\begin{theorem}{\bf (Structure of the Tree Automaton)}
    Given a set of predicates $\cP$, input vector $\vec{x} = \tup{x_1, \ldots
    x_n}$, and DSL $G$,
    let $\cA = (Q, Q_f, \Delta)$ be the AFTA returned by the
    function $\mathsf{ConstructAFTA}(\vec{x}, G, \cP)$.
    Then for all symbols $s$ in $G$, 
    for all expressions $e$ starting from symbol $s$ 
    (and height less than bound $b$), 
    there exists a state $q^{\cVvp}_s \in Q$,
    such that, $e$ is accepted by the automaton $(Q, \{q^\cVvp_s\}, \Delta)$, 
    where $\cVvp = \tup{\exec{e}^\cP x_1, \ldots \exec{e}^\cP x_n }$. 
\end{theorem}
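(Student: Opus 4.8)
The plan is to prove the statement by induction on the height of the expression $e$, mirroring the inductive structure shared by the abstract execution semantics of Figure~\ref{fig:execabstract} and the AFTA construction rules of Figure~\ref{fig:rulesafta}. At each height I would establish two things simultaneously: that a state $q^{\cVvp}_s$ with $\cVvp = \tup{\exec{e}^\cP x_1, \ldots, \exec{e}^\cP x_n}$ belongs to $Q$, and that the parse tree of $e$ rewrites to this state using transitions in $\Delta$ (equivalently, $e$ is accepted by $(Q, \{q^\cVvp_s\}, \Delta)$). Because every subexpression of $e$ has strictly smaller height, and $e$ has height below the threshold $b$, every subexpression also has height below $b$, so the induction hypothesis is available for all of them and their states are guaranteed to have been admitted during the (saturating) construction.

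For the base cases $e$ is a single-node tree. If $e = x$, the Var rule places $q^{\cVvp}_x \in Q$ with $\cVvp = \alpha^\cP(\tup{x = x_1, \ldots, x = x_n})$; the Variable rule of Figure~\ref{fig:execabstract} gives $\exec{x}^\cP x_j = \alpha^\cP(x = x_j)$, so the two vectors agree component-wise and the leaf $x$ rewrites to $q^{\cVvp}_x$ trivially. The constant case $e = t \in T_C$ is symmetric, pairing the Const rule with the Constant rule and using that $\exec{t}x_j$ is independent of $j$.

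For the inductive step, write $e = f(e_1, \ldots, e_k)$ where $s \rightarrow f(s_1, \ldots, s_k) \in P$ and each $e_i$ starts from $s_i$. By the induction hypothesis each $e_i$ yields a state $q^{\cVvp_i}_{s_i} \in Q$ with $\cVvp_i = \tup{\exec{e_i}^\cP x_1, \ldots, \exec{e_i}^\cP x_n}$, accepted with that state. The Prod rule then fires on exactly these states, adding $q^{\cVvp}_s$ with $\cvp_j = \alpha^\cP(\exec{f(\cVvp_{1j}, \ldots, \cVvp_{kj})}^\#)$ together with the transition $f(q^{\cVvp_1}_{s_1}, \ldots, q^{\cVvp_k}_{s_k}) \rightarrow q^{\cVvp}_s$ in $\Delta$. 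It remains to match this $\cVvp$ against $\tup{\exec{e}^\cP x_1, \ldots, \exec{e}^\cP x_n}$: the Function rule of Figure~\ref{fig:execabstract} computes $\exec{f(e_1, \ldots, e_k)}^\cP x_j = \alpha^\cP(\exec{f(\cVvp_{1j}, \ldots, \cVvp_{kj})}^\#)$, since the $j$-th components $\cVvp_{ij}$ are precisely $\exec{e_i}^\cP x_j$, so the definitions coincide component-wise. Acceptance follows by rewriting each $e_i$ to $q^{\cVvp_i}_{s_i}$ via the induction hypothesis and then applying the newly added transition.

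The genuinely non-mechanical point, which I expect to be the main obstacle, is reconciling the height bound $b$ with the saturating nature of the construction so that the Prod rule is actually enabled exactly when needed. Since a state is admitted only when the smallest tree it accepts has height below $b$, I must argue that the states $q^{\cVvp_i}_{s_i}$ supplied by the induction hypothesis are the very states (keyed by symbol $s_i$ and abstract-value vector $\cVvp_i$) that the Prod rule names in its premise, and that they are present before $q^{\cVvp}_s$ is required. This holds because states are uniquely identified by $(s_i, \cVvp_i)$ and the construction runs to a fixpoint; but making the height accounting explicit --- that $e$ of height below $b$ forces every subexpression strictly below $b$, and hence every needed state into $Q$ --- is the step that must be stated carefully rather than waved through.
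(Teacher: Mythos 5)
Your proposal is correct and follows essentially the same route as the paper's own proof (Theorem~\ref{thm:appendixstructure}): induction on the height of $e$, pairing the Var/Const rules with the base cases of the abstract execution semantics and the Prod rule with the Function rule in the inductive step. The extra care you flag about the height bound $b$ and saturation is a reasonable point that the paper itself leaves implicit, but it does not change the argument.
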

We present the proof of this theorem in the
appendix~\ref{sec:appendixproofs}(Theorem~\ref{thm:appendixstructure}).
\eat{
\begin{proof}
    We prove this theorem by using induction over height of the expression $e$.

    \noindent{\it Base Case:} Height of expression $e$ is $1$. This implies the
    symbol is either $x$ or a constant. According to Var and Const rules
    (Figure~\ref{fig:rulesafta}), 
    there exists state $q^{\cVvp}_t \in Q$ (for terminal $t$),
    where $\cVvp = 
\tup{\exec{t}^\cP x_1, \ldots \exec{t}^\cP x_n }$ and $t$ is accepted by
    automaton $(Q, \{q^\cVvp_t\}, \Delta)$.

    \noindent{\it Inductive Hypothesis:} For all symbols $s$ in $G$, for all
    expressions $e$ starting from symbol $s$ of height less than equal to $n$,
    there exists a state $q^{\cVvp}_s \in Q$,
    such that, $e$ is accepted by the automaton $(Q, \{q^\cVvp_s\}, \Delta)$, 
    where $\cVvp = \tup{\exec{e}^\cP x_1, \ldots \exec{e}^\cP x_n }$. 

    \noindent{\it Induction Step:} For any symbol $s$ in $G$, consider an
    expression $e = f(e_1, \ldots e_k)$ of height equal to $n+1$, 
    created from production $s \leftarrow f(s_1, \ldots s_k)$.
    Note the height of expressions $e_1, \ldots e_k$ is less than equal to $n$, 
    therefore using induction hypothesis, there exists states
    $q^{\cVvp_1}_{s_1}, \ldots q^{\cVvp_k}_{s_k} \in Q$, such that
    $e_i$ is accepted by automaton $(Q, \{q^{\cVvp_i}_{s_i}\}, \Delta)$, 
    where $\cVvp_i = \tup{\exec{e_i}^\cP x_1, \ldots \exec{e_i}^\cP x_n}$.
    Note based on abstract execution rules (Figure~\ref{fig:execabstract}):
    \[
        \exec{e}^\cP x_i = \alpha^\cP(\exec{f(\exec{e_1}^\cP x_i, \ldots
        \exec{e_k}^\cP x_i )}^\#)
    \]
    According to Prod rule (Figure~\ref{fig:rulesafta}), 
    there exists a state $q^\cVvp_s \in
    Q$, where $\cVvp = \tup{\exec{e}^\cP x_1, \ldots \exec{e}^\cP x_n }$, and
    $e$ is accepted by $(Q, \{q^{\cVvp}_s\}, \Delta)$.

    Therefore, by induction, for all symbols $s$ in $G$, 
    for all expressions $e$ starting from symbol $s$ 
    (and height less than bound $b$), 
    there exists a state $q^{\cVvp}_s \in Q$,
    such that, $e$ is accepted by the automaton $(Q, \{q^\cVvp_s\}, \Delta)$, 
    where $\cVvp = \tup{\exec{e}^\cP x_1, \ldots \exec{e}^\cP x_n }$. 
\end{proof}
}

\begin{corollary}\label{cor:abstractvalue}
    Given a set of predicates $\cP$, input vector $\vec{x} = \tup{x_1, \ldots
    x_n}$, and DSL $G$,
    let $\cA = (Q, Q_f, \Delta)$ be the AFTA returned by the
    function $\mathsf{ConstructAFTA}(\vec{x}, G, \cP)$.
    All programs $p$ (of height less than bound $b$) 
    is accepted by $\cA$. 
    For any 
    accepting state $q^{\cVvp}_{s_0} \in Q_f$, a program $p$ is accepted
    by the automaton $(Q, \{q^{\cVvp}_{s_0}\}, \Delta)$ 
    if and only if $\forall i \in [1, n]$,
    $p[x_i] \in \gamma(\cvp_i)$.
\end{corollary}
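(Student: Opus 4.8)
My plan is to derive this corollary directly from the Structure Theorem, which already exhibits, for every expression $e$ from a symbol $s$ (of height below $b$), the canonical state $q^{\cVvp}_s$ with $\cVvp = \tup{\exec{e}^\cP x_1, \ldots, \exec{e}^\cP x_n}$ accepting $e$. First I would dispatch the claim that every program $p$ of height below $b$ is accepted by $\cA$: a program is $\lambda x.e$ with $e$ an expression from the start symbol $s_0$, so the Structure Theorem yields a state $q^{\cVvp}_{s_0} \in Q$ accepting $e$, and the Final rule of Figure~\ref{fig:rulesafta} places $q^{\cVvp}_{s_0}$ in $Q_f$; hence $p$ is accepted by $\cA$.

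For the characterization of which accepting state accepts $p$, I would prove the more general statement by induction on the parse tree of $e$ (equivalently, on the height of an accepting run): an expression $e$ from symbol $s$ is accepted by $(Q, \{q^{\cVvp}_s\}, \Delta)$ if and only if $\exec{e}x_i \in \gamma(\cvp_i)$ for every $i \in [1,n]$, and then instantiate $s := s_0$ with $e$ the body of $p$. The forward direction is the soundness argument. In the base cases the accepting state is built by the Var/Const rules with $\cvp_i = \alpha^\cP(x = x_i)$ or $\alpha^\cP(t = \exec{t})$; since $\cvp \implies \alpha^\cP(\cvp)$ gives $\gamma(\cvp) \subseteq \gamma(\alpha^\cP(\cvp))$ and the concrete output satisfies the precise predicate, we obtain $\exec{e}x_i \in \gamma(\cvp_i)$. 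In the inductive step for $e = f(e_1, \ldots, e_k)$, the run uses a transition $f(q^{\cVvp_1}_{s_1}, \ldots, q^{\cVvp_k}_{s_k}) \to q^{\cVvp}_s$ with each $e_l$ accepted by $q^{\cVvp_l}_{s_l}$; the induction hypothesis gives $\exec{e_l}x_i \in \gamma(\cvp_{li})$, soundness of the abstract semantics gives $\exec{f(\exec{e_1}x_i, \ldots, \exec{e_k}x_i)} \in \gamma(\exec{f(\cvp_{1i}, \ldots, \cvp_{ki})}^\#)$, and the transition invariant $\exec{f(\cvp_{1i}, \ldots, \cvp_{ki})}^\# \implies \cvp_i$ (together with monotonicity of $\gamma$) closes the step with $\exec{e}x_i \in \gamma(\cvp_i)$.

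The backward direction is the step I expect to be the main obstacle, precisely because the abstraction is deliberately imprecise, so $\gamma$-membership at a state and reachability of that state by a run need not coincide on the nose. Here I would again lean on the Structure Theorem, which exhibits the canonical run of $e$ ending in the state $q^{\cVvp^*}_{s_0}$ with $\cVvp^* = \tup{\exec{e}^\cP x_1, \ldots}$, and then argue that the hypothesis $p[x_i] \in \gamma(\cvp_i)$ lets the run be realized into $q^{\cVvp}_{s_0}$, reconciling the state the run reaches with the target abstract value through the same three ingredients: the $\alpha^\cP$ soundness inequality, soundness of $\exec{\cdot}^\#$, and the transition invariant. The crux is to pin down the exact sense in which the AFTA admits a program to be accepted at each state that over-approximates its outputs, and to rule out a program whose concrete outputs lie in $\gamma(\cvp)$ yet whose run cannot reach $q^{\cVvp}_{s_0}$; carefully handling this interaction between acceptance and concretization is where the bulk of the work lies, which is presumably why the full argument is deferred to the appendix.
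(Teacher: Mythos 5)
Your handling of the first claim (every program of bounded height is accepted, via the Structure Theorem and the Final rule) and of the forward direction (accepted at $q^{\cVvp}_{s_0}$ implies $p[x_i] \in \gamma(\cvp_i)$ for all $i$, via soundness of $\exec{\cdot}^\#$ and the transition invariant) is correct. The gap is the backward direction, which you explicitly leave open and hope to find ``deferred to the appendix'' --- but the paper gives no proof of this corollary at all; it is presented as an immediate consequence of the Structure Theorem. The difficulty you flag is real and, in the form you are trying to prove the statement, fatal. The Prod rule determines the parent state uniquely from the child states ($\cvp_j = \alpha^\cP(\exec{f(\cVvp_{1j},\ldots,\cVvp_{kj})}^\#)$), so the AFTA is bottom-up deterministic: the parse tree of $p$ has a unique run, ending in $q^{\cVvp^*}_{s_0}$ with $\cVvp^* = \tup{\exec{p}^\cP x_1,\ldots,\exec{p}^\cP x_n}$. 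Nothing prevents a \emph{different} accepting state $q^{\cVvp}_{s_0}$ with $\cVvp \neq \cVvp^*$ from satisfying $p[x_i] \in \gamma(\cvp_i)$ for all $i$ --- distinct conjunctions of predicates (say $\mathsf{len}(s)=3$ versus $s[0]=c$) generally have overlapping concretizations --- and such a state does not accept $p$. So no induction will ``realize the run into $q^{\cVvp}_{s_0}$''; the $\gamma$-based biconditional does not hold for arbitrary accepting states, and the program you worry about ruling out genuinely exists.

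What you can prove, and what the rest of the paper actually relies on (see the discussion preceding $\mathsf{MinCost}$ and the proof of Theorem~\ref{thm:appendixargmin}), is the biconditional $p$ is accepted by $(Q,\{q^{\cVvp}_{s_0}\},\Delta)$ if and only if $\cVvp = \tup{\exec{p}^\cP x_1,\ldots,\exec{p}^\cP x_n}$. The ``if'' direction is exactly the Structure Theorem; the ``only if'' direction is the determinism observation above. Soundness then gives $p[x_i]\in\gamma(\cvp_i)$ as a one-way consequence, which is all the synthesis algorithm needs. I would recommend restating the corollary in that form and proving it in two lines, rather than attempting to close the backward direction as written.
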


\subsection{Synthesis Algorithm}
We present our synthesis algorithm in
Figure~\ref{alg:top-level}.
The
$\mathsf{Synthesize}$ procedure takes a noisy dataset $\cD$, a DSL $G$, 
a threshold $\epsilon \geq 0$, initial predicates $\cP$, a universe of
possible predicates $\cU$, objective function $U$, loss function $L$, and a
complexity measure $C$. We assume that $\mathsf{true}, \mathsf{false} \in \cP$.

The synthesis algorithm consists of a refinement loop (line 3-10). The loop
first constructs a Abstract Finite Tree Automaton (line 4) with the current set
of predicates $\cP$ using rules presented in Figure~\ref{fig:rulesafta}. The
algorithm then uses the $\mathsf{MinCost}$ function to generate a candidate
program $p^*$ (line 5). 
The algorithm maintains the program $p_r$, which is the {\it best} candidate program 
out of all the candidate programs $p^*$ generated.

If the {\it distance} between the current program $p^*$ and the best possible program
in the DSL $G$ is less a tolerance level $\epsilon$ ($\mathsf{Distance}$
function, line 8), the algorithm returns the best candidate program $p_r$. Note that
\[
    \tup{L(p_r[\vec{x}], \vec{y}), C(p_r)} \leq_U \tup{L(p^*[\vec{x}], \vec{y}),
    C(p^*)}
\]
Otherwise, the algorithm refine our AFTA to either improve our estimation of the best
possible program or synthesize a better candidate program. To refine our AFTA,
the algorithm first picks an input-output example $(x, y)$ from dataset $\cD$,
on which we can improve the candidate program $p^*$ (line 9). Given an
input-output example $(x, y)$, the procedure $\mathsf{OptimizeAndBackPropogate}$
constructs the constraints required to improve the AFTA and then returns the set
of predicates which will allow the algorithm to build a more refined AFTA.

We discuss each of these sub-procedures in detail next.
\begin{figure}[tb]
\begin{algorithmic}[1]
    \Procedure{Synthesize}{$\cD, G, \epsilon, \cP, \cU, U, L, C$}
    \Statex {\bf input:} 
    Noisy Dataset $\cD = (\vec{x}, \vec{y})$, DSL $G$, and tolerance $\epsilon$.
    \Statex {\bf input:}
    initial predicates $\cP$, and universe of predicates $\cU$.
    \Statex{\bf input:}
    Objective Function $U$, Loss Function $L$, and Complexity metric $C$.
    \Statex {\bf output:} A program $p^*$, such that, $p^*$ satisfies the 
    {\it $\epsilon$-correctness} condition.
    \State $p_r := \mathsf{null}$;
   \While{true}
    \State $\cA := \mathsf{ConstructAFTA}(\vec{x}, G, P)$; 
   \State $p^*:= \mathsf{MinCost}(\cA, \cD, U, L, C)$; 
    \If{$p_r = \mathsf{null}$ or $\tup{L(p^*[\vec{x}], \vec{y}), C(p^*)} \leq_U 
    \tup{L(p_r[\vec{x}], \vec{y}), C(p_r)}$}
    \State $p_r := p^*$;
    \EndIf 
    \If{$ \mathsf{Distance}(p^*, \cD, \cP, L) \leq \epsilon$} {\bf return} $p_r$;
    \EndIf
    \State $x, y := \mathsf{PickDimension}(p^*, \cD, \cP, L)$;
    \State $\cP := \cP 
    \bigcup \mathsf{OptimizeAndBackPropogate}(p^*, x, y, \cP, \cU)$;
    \EndWhile
\EndProcedure
\end{algorithmic}
\caption{Algorithm for noisy program synthesis using Abstraction Refinement
    based Optimization.}
\label{alg:top-level}
\end{figure}

\subsection{\bf Minimum Cost Candidate}
We present the implementation of procedure $\mathsf{MinCost}$ in
Figure~\ref{alg:mincost}.
Given an AFTA $(Q, Q_f, \Delta)$, noisy dataset $\cD = (\vec{x}, \vec{y})$,
objective function $U$, loss function $L$, and complexity measure $C$,
$\mathsf{MinCost}$ returns a program $p_{q^*}$ which minimizes the {\it
abstract} objective function, where, given a program $p$, the {\it abstract}
objective function is defined as 
\[
    U(L(\exec{p}^\cP\vec{x}, \vec{y}), C(p))
\]

\begin{figure}[tb]
\begin{algorithmic}[1]
    \Procedure{MinCost}{$\cA, \cD, U, L, C$}
    \Statex {\bf input:} 
    AFTA $\cA = (Q, Q_f, \Delta)$,
    Noisy Dataset $\cD = (\vec{x}, \vec{y})$.
\Statex{\bf input:} 
    Objective Function $U$, Loss Function $L$, and Complexity metric $C$.
    \Statex {\bf output:} A program $p^*$, such that, $\forall~p \in G.
    \tup{L(\exec{p^*}^\cP\vec{x}, \vec{y}), C(p^*)} \leq_U
    \tup{L(\exec{p}^\cP\vec{x}, \vec{y}),
C(p)}$.
\For{$q \in Q_f$} 
\State $P[q] := \mathsf{LeastComplex}_C((Q, \{q\}, \Delta))$; 
\Statex{\Comment{Least complex program for a given accepting state.}
}
\EndFor
\State $q^* := \mathsf{null}$; $\cVvp^* := \mathsf{null};$
\For{$q^{\cVvp}_{s_0} \in Q_f$}
\If{$q^* = \mathsf{null}$ or $\tup{L(\cVvp^*, \vec{y}), C(P[q^*])} 
\leq_U \tup{L(\cVvp, \vec{y}), C(P[q^{\cVvp}_{s_0}])}$
}
    \State $\cVvp^* = \cVvp$; $q^* := q^{\cVvp}_{s_0}$;
\EndIf
\EndFor
\Statex{\Comment{$q^*$ accepts a program which
minimizes the Abstract Objective Function.}}
\State {\bf return} $P[q^*]$;
\EndProcedure
\end{algorithmic}
\caption{Procedure for synthesizing the program which minimizes the Abstract
    Objective Function.}
\label{alg:mincost}
\end{figure}

\eat{
\[
\begin{array}{c}
    p_{q^*} = 
    \mathsf{MinCost}\big((Q, Q_f, \Delta), (\vec{x}, \vec{y}), U, L, C\big)
\\
    \mathsf{where}
~\forall~q \in Q_f. p_q := \mathsf{LeastComplex}_C(Q, \{q\}, \Delta)\\
q^* := \mathsf{argmin}_{q^{\cVvp}_{s_0} \in Q_f} U(L(\cVvp, \vec{y}), 
C(p_{q^\cVvp_{s_0}}))
\end{array}
\]
}

\noindent
Therefore, for all programs $p \in G$:
\[
    \tup{L(
    \exec{p_{q^*}}^\cP\vec{x}, \vec{y}
    ), C(p_{q^*})} \leq_U \tup{L(
    \exec{p}^\cP\vec{x}, \vec{y}
    ), C(p)} 
\]
Note that, since for all programs $p$, $L(\exec{p}^\cP\vec{x}, \vec{y}) \leq
L(p[\vec{x}], \vec{y})$, the following statement is true:
\[
\tup{L(
    \exec{p}^\cP\vec{x}, \vec{y}
    ), C(p)} \leq_U \tup{L(p[\vec{x}], \vec{y}), C(p)}
\]
Hence, for programs $p \in G$:
\[
    \tup{L(
    \exec{p_{q^*}}^\cP\vec{x}, \vec{y}
    ), C(p_{q^*})} \leq_U \tup{L(p[\vec{x}], \vec{y}), C(p)}
\]
The procedure first finds the least complex program (i.e., program
which minimizes the complexity measure) for each accepting state $q \in Q_f$
(line 2-3). Given an accepting state of the form $q^{\cVvp}_{s_0}$, a program
$p \in G$ is accepted by the automaton $(Q, \{q^{\cVvp}_{s_0}\}, \Delta)$ 
if and only if $\exec{p}^\cP\vec{x} = \cVvp$.
Given an accepting $q^{\cVvp}_{s_0}$, $P[q^{\cVvp}_{s_0}]$ is the least
complex program which maps input vector $\vec{x}$ to outputs $\cVvp$, i.e.,
\[
    P[q^{\cVvp}_{s_0}] \in \mathsf{argmin}_{p \in G[\vec{x} \rightarrow
    \cVvp]} C(p)
\]
where
$
   G[\vec{x} \rightarrow \cVvp] = \{p \vert p \in G, \exec{p}^\cP\vec{x} = \cVvp\}
$.

The algorithm then finds an accepting state $q^{\cVvp^*}_{s_0} \in Q_f$, 
such that,
for all accepting states $q^{\cVvp}_{s_0} \in Q_f$,
\[
   \tup{L(\cVvp^*, \vec{y}), C(P[q^{\cVvp^*}_{s_0}])}
\leq_U
   \tup{L(\cVvp, \vec{y}), C(P[q^{\cVvp}_{s_0}])}
\]

\begin{theorem}\label{thm:argmin}
    Given predicates $\cP$, DSL $G$, noisy dataset $\cD = (\vec{x}, \vec{y})$, 
    objective function $U$, loss
    function $L$, complexity measure $C$, and $\cA =
    \mathsf{ConstructAFTA}(\vec{x}, G, \cP)$, if $p^* =
    \mathsf{MinCost}(\cA, \cD, U, L, C)$ then 
    \[
        p^* \in \mathsf{argmin}_{p \in G} U(L(\exec{p}^\cP\vec{x}, \vec{y}),
        C(p))
    \]
    i.e., $p^*$ minimizes the {\it abstract} objective function.
\end{theorem}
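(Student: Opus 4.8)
The plan is to reduce the claim to a \emph{partition-plus-monotonicity} argument resting on the Structure Theorem. First I would record what $\mathsf{MinCost}$ actually computes. By the definition of $P[\cdot]$ used in lines~2--3 of Figure~\ref{alg:mincost}, for each accepting state $q^{\cVvp}_{s_0}$ the program $P[q^{\cVvp}_{s_0}]$ is a least-complex member of $G[\vec{x} \rightarrow \cVvp] = \{p \mid \exec{p}^\cP\vec{x} = \cVvp\}$. The selection loop then returns $p^* = P[q^*]$ for a state $q^* = q^{\cVvp^*}_{s_0}$ that is $\leq_U$-minimal among all representatives, i.e.
\[
\tup{L(\cVvp^*, \vec{y}), C(P[q^*])} \leq_U \tup{L(\cVvp, \vec{y}), C(P[q^{\cVvp}_{s_0}])} \quad \text{for every } q^{\cVvp}_{s_0} \in Q_f.
\]
Note also that $\cVvp^* = \exec{p^*}^\cP\vec{x}$, since $p^*$ is accepted by $(Q, \{q^*\}, \Delta)$.

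Next I would fix an arbitrary program $p \in G$ (of height below the bound $b$) and show it cannot beat $p^*$ on the abstract objective. By the Structure Theorem, equivalently Corollary~\ref{cor:abstractvalue}, there is an accepting state $q^{\cVvp}_{s_0} \in Q_f$ with $\cVvp = \exec{p}^\cP\vec{x}$, so $p \in G[\vec{x} \rightarrow \cVvp]$ and therefore $C(P[q^{\cVvp}_{s_0}]) \leq C(p)$. The key observation is that $p$ and $P[q^{\cVvp}_{s_0}]$ carry the \emph{same} abstract value vector $\cVvp$, hence the same abstract loss $L(\cVvp, \vec{y})$. Since $U$ is monotonically non-decreasing in its complexity argument, this gives
\[
\tup{L(\cVvp, \vec{y}), C(P[q^{\cVvp}_{s_0}])} \leq_U \tup{L(\cVvp, \vec{y}), C(p)} = \tup{L(\exec{p}^\cP\vec{x}, \vec{y}), C(p)}.
\]

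Finally I would chain the two inequalities. Combining the minimality of $q^*$ with the per-partition bound, and using $\cVvp^* = \exec{p^*}^\cP\vec{x}$, yields
\[
\tup{L(\exec{p^*}^\cP\vec{x}, \vec{y}), C(p^*)} \leq_U \tup{L(\cVvp, \vec{y}), C(P[q^{\cVvp}_{s_0}])} \leq_U \tup{L(\exec{p}^\cP\vec{x}, \vec{y}), C(p)}.
\]
As $p$ was arbitrary, $p^*$ minimizes the abstract objective over $G$, which is the claim.

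I expect the main obstacle to be conceptual rather than computational: one must ensure the abstract value vector genuinely \emph{partitions} the (bounded-height) program space, so that every $p$ lands in some accepting state and every program within a partition carries an identical abstract loss. This is precisely what the Structure Theorem supplies, so the argument hinges on invoking it correctly in tandem with the monotonicity of $U$; the two inequality chains above then follow mechanically. One minor point to flag is the height bound $b$: the $\mathsf{argmin}$ should be read over the programs accepted by $\cA$, namely those of height below $b$, matching the scope of Corollary~\ref{cor:abstractvalue}.
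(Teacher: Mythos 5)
Your proposal is correct and follows essentially the same route as the paper's proof: invoke the Structure Theorem/Corollary~\ref{cor:abstractvalue} to place every program $p$ in an accepting state with $\cVvp = \exec{p}^\cP\vec{x}$, use least-complexity of $P[q^{\cVvp}_{s_0}]$ within that partition together with monotonicity of $U$, and chain with the $\leq_U$-minimality of $q^*$. You merely make explicit the intermediate monotonicity step and the height-bound caveat that the paper leaves implicit.
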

We present the proof of this theorem in the appendix~\ref{sec:appendixproofs}
(Theorem~\ref{thm:appendixargmin}).
\eat{
\begin{proof}
    Let $\cA = (Q, Q_f, \Delta)$.
    From corollary~\ref{cor:abstractvalue}, for each program $p \in G$, there
    exists a state $q^\cVvp_{s_0} \in Q_f$, such that, $\exec{p}^\cP \vec{x} =
    \cVvp$.
Since the algorithm finds an accepting state $q^{\cVvp^*}_{s_0} \in Q_f$, 
such that,
for all accepting states $q^{\cVvp}_{s_0} \in Q_f$,
\[
   \tup{L(\cVvp^*, \vec{y}), C(P[q^{\cVvp^*}_{s_0}])}
\leq_U
   \tup{L(\cVvp, \vec{y}), C(P[q^{\cVvp}_{s_0}])}
\]
for all $p \in G$,
    \[
   \tup{L(\cVvp^*, \vec{y}), C(P[q^{\cVvp^*}_{s_0}])}
\leq_U
    \tup{L(\exec{p}^\cP\vec{x}, \vec{y}), C(p)}
\]
    Since $p^* = P[q^{\cVvp^*}_{s_0}]$,
     \[
        p^* \in \mathsf{argmin}_{p \in G} U(L(\exec{p}^\cP\vec{x}, \vec{y}),
        C(p))
    \]
\end{proof}
}

\subsection{\bf Termination Condition and Tolerance}
Given a candidate program $p^*$, predicates $\cP$, noisy dataset $\cD =
(\vec{x}, \vec{y})$, and a
loss function $L$, the $\mathsf{Distance}$ function returns the difference
between the concrete loss of program $p^*$ over noisy dataset $\cD$ and 
the abstract loss (given predicates
from $\cP$) over noisy dataset $\cD$. Formally:
\[
\mathsf{Distance}\big(p, 
(\vec{x}, \vec{y}), \cP, L\big)
:= 
L(p[\vec{x}], \vec{y}) - L(\exec{p}^\cP \vec{x}, \vec{y})
\]

The algorithm terminates if the distance is less than equal to the tolerance
level $\epsilon$.
Note that if $\epsilon = 0$, then the algorithm only terminates when
$L(p^*[\vec{x}], \vec{y}) = L(\exec{p^*}^\cP\vec{x}, \vec{y})$, and since for
all program $p \in G$:
\[
    \tup{L(
    \exec{p_{q^*}}^\cP\vec{x}, \vec{y}
    ), C(p_{q^*})} \leq_U \tup{L(
    p[\vec{x}], \vec{y}
    ), C(p)} \]
is true, the following statement is also true:
\[
    p^* \in \mathsf{argmin}_{p \in G}U(L(p[\vec{x}], \vec{y}), C(p))
\]

In general, the previous work in noisy program synthesis have maintained a very
strict version of correctness, i.e., they generally synthesize a
program $p^*$ which minimizes the objective function. This leaves out any
speedups which can be achieved to by relaxing the requirement to synthesizing a
program which is {\it close} to the optimal program but may not be one of the
optimal programs.

To capture this relaxation, we introduce the concept of {\bf
$\epsilon$-correctness}. 
Given DSL $G$, a noisy dataset $\cD = (\vec{x}, \vec{y})$, an objective function
$U$, a loss function $L$, a complexity measure $C$, and set of programs $G_p
\subseteq G$, let $\mathds{B}_{\epsilon}(G_p)$ 
be a set of programs in $G$, 
such that, $p^* \in 
\mathds{B}_{\epsilon}(G_p)$ if and only if there exists
a program $p \in G_p$, such that, 
\[
    \tup{L(p^*[\vec{x}], \vec{y}), C(p^*)} \leq_U 
    \tup{L(p[\vec{x}], \vec{y}) + \epsilon, C(p)}
\]
i.e., $p^*$ will be a {\it better fit} the dataset $\cD$ compared to program
$p$, if the loss of $p$ over dataset $\cD$ was increased by $\epsilon$.  

\begin{definition}\label{def:epsiloncorrectness}{\bf ($\epsilon$-correctness)}
    Given a noisy dataset $\cD = (\vec{x}, \vec{y})$,
    a DSL $G$, an objective function $U$, a loss
    Function $L$, and a complexity measure $C$, a program $p_r \in G$ is
    {\bf $\epsilon$-correct} if and only if:
    \[
        p_r \in \mathds{B}_\epsilon(\mathsf{argmin}_{p \in G}U(L(p[\vec{x}],
        \vec{y}),C(p))
    \]
\end{definition}
\noindent
A program $p_r \in G$ is 
    {\bf $\epsilon$-correct} if and only if 
there exists a $p^* \in G$, such that,
    \[
        \tup{L(p_r[\vec{x}], \vec{y}), C(p_r)} \leq_U 
        \tup{L(p^*[\vec{x}], \vec{y}), C(p^*)}
    \]
    \[
      \forall p \in G.~\tup{L(p^*[\vec{x}] - \epsilon, \vec{y}), C(p^*)} 
      \leq_U \tup{L(p[\vec{x}], \vec{y}), C(p)}
    \]
Note that, for $\epsilon = 0$, the above condition reduces to
\[
    \forall p \in G.~\tup{L(p_r[\vec{x}], \vec{y}), C(p_r)} \leq_U
    \tup{L(p[\vec{x}], \vec{y}),C(p)}
\]
Therefore, for $\epsilon = 0$, $p_r \in \mathsf{argmin}_{p \in G}U(L(p[\vec{x}],
\vec{y}), C(p))$.

\begin{theorem}\label{thm:soundness}{\bf (Soundness)}
    Given a dataset $\cD$, a DSL $G$, tolerance $\epsilon \geq 0$, universe of
    predicates $\cU$, initial predicates $\cP$, objective function $U$, loss
    function $L$, and the complexity measure $C$, 
    if Algorithm~\ref{alg:top-level} returns the program $p^*$, then $p^*$
    satisfies the {\it $\epsilon$-correctness}
    condition (Definition~\ref{def:epsiloncorrectness}).
\end{theorem}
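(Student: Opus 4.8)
The plan is to prove soundness conditionally on the algorithm halting, and to route the argument through the unpacked characterization of $\epsilon$-correctness stated immediately after Definition~\ref{def:epsiloncorrectness} (the one phrased with a witness whose loss is reduced by an ``$\epsilon$-slack''), since this form lets me avoid translating the objective function by $\epsilon$. Write $p_r$ for the program returned by Algorithm~\ref{alg:top-level} (this is the program the theorem statement names $p^*$), and let $p^*$ be the $\mathsf{MinCost}$ candidate computed on the iteration in which the algorithm halts. First I would observe that the loop is exited only through the termination test (line~8), so at that iteration $\mathsf{Distance}(p^*, \cD, \cP, L) \leq \epsilon$; unfolding the definition of $\mathsf{Distance}$ this reads $L(p^*[\vec{x}], \vec{y}) - L(\exec{p^*}^\cP\vec{x}, \vec{y}) \leq \epsilon$, equivalently $L(p^*[\vec{x}], \vec{y}) - \epsilon \leq L(\exec{p^*}^\cP\vec{x}, \vec{y})$.

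I would then instantiate the existential witness in the $\epsilon$-correctness condition with this same $p^*$ and verify its two requirements. The first requirement, $\tup{L(p_r[\vec{x}], \vec{y}), C(p_r)} \leq_U \tup{L(p^*[\vec{x}], \vec{y}), C(p^*)}$, is discharged directly by the conditional update of $p_r$ (lines~6--7): $p_r$ is overwritten only by a candidate of no-larger concrete objective value and is otherwise kept, so on exit $p_r$ is at least as good as the final $p^*$ under $\leq_U$.

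The content lies in the second requirement, $\forall p \in G.~\tup{L(p^*[\vec{x}], \vec{y}) - \epsilon, C(p^*)} \leq_U \tup{L(p[\vec{x}], \vec{y}), C(p)}$, which I would obtain by a single transitive chain under the monotonicity of $U$ in its loss coordinate. Starting from the terminal bound $L(p^*[\vec{x}], \vec{y}) - \epsilon \leq L(\exec{p^*}^\cP\vec{x}, \vec{y})$, monotonicity gives $\tup{L(p^*[\vec{x}], \vec{y}) - \epsilon, C(p^*)} \leq_U \tup{L(\exec{p^*}^\cP\vec{x}, \vec{y}), C(p^*)}$. Theorem~\ref{thm:argmin} then supplies, for every $p \in G$, the abstract-optimality inequality $\tup{L(\exec{p^*}^\cP\vec{x}, \vec{y}), C(p^*)} \leq_U \tup{L(\exec{p}^\cP\vec{x}, \vec{y}), C(p)}$. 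Finally I would invoke the abstract-loss lower bound $L(\exec{p}^\cP\vec{x}, \vec{y}) \leq L(p[\vec{x}], \vec{y})$: by soundness of the abstract semantics (Corollary~\ref{cor:abstractvalue}) each $p[x_i] \in \gamma(\exec{p}^\cP x_i)$, so $L(\exec{p}^\cP x_i, y_i) = \min_{z \in \gamma(\exec{p}^\cP x_i)} L(z, y_i) \leq L(p[x_i], y_i)$, and summing over $i$ yields the bound, which together with monotonicity gives $\tup{L(\exec{p}^\cP\vec{x}, \vec{y}), C(p)} \leq_U \tup{L(p[\vec{x}], \vec{y}), C(p)}$. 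Transitivity of $\leq_U$ chains these into the desired inequality, completing condition two.

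I expect the principal subtlety to be the choice of witness rather than any hard estimate. Targeting the raw $\mathds{B}_\epsilon$ form would force me to carry the $\epsilon$ slack from $p^*$'s loss onto an optimal program's loss, which requires a translation-compatibility property of $U$ beyond the bare monotonicity hypothesis (it does hold for the tradeoff and lexicographic objectives, but is not guaranteed for an arbitrary monotone $U$); routing the argument through the witness $p^*$ with the slack subtracted on the left sidesteps this entirely. The only remaining care point is the abstract-loss lower bound, which I would isolate as a one-line lemma from Corollary~\ref{cor:abstractvalue} and the definition $L(\cvp, y) = \min_{z \in \gamma(\cvp)} L(z, y)$; everything else is transitivity together with the stated monotonicity of $U$.
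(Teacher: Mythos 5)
Your proposal is correct and follows essentially the same route as the paper's own proof: it extracts the bound $L(p^*[\vec{x}], \vec{y}) - \epsilon \leq L(\exec{p^*}^\cP\vec{x}, \vec{y})$ from the termination test, chains it via Theorem~\ref{thm:argmin} and the abstract-loss lower bound to get $\tup{L(p^*[\vec{x}], \vec{y}) - \epsilon, C(p^*)} \leq_U \tup{L(p[\vec{x}], \vec{y}), C(p)}$ for all $p$, and uses the conditional update of $p_r$ for the first clause of the unpacked $\epsilon$-correctness condition. The only difference is that you make explicit the intermediate steps (the witness choice and the abstract-loss lower bound from Corollary~\ref{cor:abstractvalue}) that the paper leaves implicit.
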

\begin{proof}
    Let us assume that the algorithm terminates on the $i^{th}$ iteration.
    Let $\cA_i = (Q, Q_f, \Delta)$ 
    be the AFTA when the algorithm terminates. Let $p_i$ be the
    program returned by $\mathsf{MinCost}$ on the $i^{th}$ iteration.

    \noindent
    From Theorem~\ref{thm:argmin}, for all programs $p \in G$,
\[
    \tup{L(
    \exec{p_i}^\cP\vec{x}, \vec{y}
    ), C(p_i)} \leq_U \tup{L(p[\vec{x}], \vec{y}), C(p)}
\]
    When the algorithm terminates,  the following condition is true:
    \[
        \mathsf{Distance}(p_i, \cD, \cP, L) \leq \epsilon
    \]
    which implies: 
    \[
        L(p_i[\vec{x}], \vec{y}) -  
        L(\exec{p_i}^{\cP_i} \vec{x}, \vec{y}) \leq \epsilon
    \]
    Therefore, for all programs $p \in G$, 
    \[
        \tup{L(p_i[\vec{x}], \vec{y}) - \epsilon, C(p_i)} \leq_U
        \tup{L(p[\vec{x}], \vec{y}), C(p)}
    \]
    and the following is true for the synthesized program $p_r$,
    \[
        \tup{L(p_r[\vec{x}], \vec{y}), C(p_r)} \leq_U
        \tup{L(p_i[\vec{x}], \vec{y}), C(p_i)}       
    \]

    \noindent
    Hence, if the algorithm~\ref{alg:top-level} returns a program $p_r$, then
    $p_r$ satisfies the {\it $\epsilon$-correctness} condition.
\end{proof}

\subsection{Abstraction Refinement based Optimization}
Given a dataset $\cD$ and predicates $\cP$, the program $p^*$ (returned by
$\mathsf{MinCost}$) minimizes the {\it abstract} objective function, i.e., for
all programs $p \in G$:
\[
    \tup{L(
    \exec{p^*}^\cP\vec{x}, \vec{y}
    ), C(p^*)} \leq_U 
 \tup{L(
    \exec{p}^\cP\vec{x}, \vec{y}
    ), C(p)} \leq_U 
    \tup{L(p[\vec{x}], \vec{y}), C(p)}
\]
Since, the algorithm did not terminate, $\mathsf{Distance}(p^*, \cP, \cD,
L) > \epsilon$.

Let us consider the case when $\epsilon = 0$. Since $\mathsf{Distance}(p^*, \cP,
\cD, L) > 0$, the concrete loss of program $p^*$ over dataset $\cD$ is greater
than the abstract loss of $p^*$ over $\cD$. Formally,
\[
    L(p^*[\vec{x}], \vec{y}) > L(\exec{p^*}^\cP \vec{x}, \vec{y})
\]
This means that
\[
   \tup{L(
    \exec{p^*}^\cP\vec{x}, \vec{y}
    ), C(p^*)} 
<_U 
    \tup{L(p^*[\vec{x}], \vec{y}), C(p^*)}
\]
At this point, even though, for all programs $p \in G$, the following is true:
\[
    \tup{L(
    \exec{p^*}^\cP\vec{x}, \vec{y}
    ), C(p^*)} \leq_U 
    \tup{L(p[\vec{x}], \vec{y}), C(p)}
\]
We cannot prove that $p^*$ is the optimal function, i.e.,
\[
   \tup{L(
    p^*[\vec{x}], \vec{y}
    ), C(p^*)} \leq_U 
    \tup{L(p[\vec{x}], \vec{y}), C(p)}
\]
And therefore, just using predicates $\cP$, we cannot prove that $p_r$ is the
optimal function.
\eat{
\[
    p_r \in \mathsf{argmin}_{p \in G} U(L(p[\vec{x}], \vec{y}), C(p))
\]
}

Similarly, if $\epsilon > 0$, for all programs $p \in G$, the following is true:
\[
    \tup{L(
    \exec{p^*}^\cP\vec{x}, \vec{y}
    ), C(p^*)} \leq_U 
    \tup{L(p[\vec{x}], \vec{y}), C(p)}
\]
But we cannot prove that the following statement is true:
\[
    \tup{L(
    p^*[\vec{x}] - \epsilon, \vec{y}
    ), C(p^*)} \leq_U 
    \tup{L(p[\vec{x}], \vec{y}), C(p)}
\]
And therefore, just using predicates $\cP$, we cannot prove that
$p_r$ is $\epsilon$-correct.
\eat{
\[
    p_r \in \mathds{B}_\epsilon(\mathsf{argmin}_{p \in G} U(L(p[\vec{x}], 
    \vec{y}), C(p)))
\]
}

Therefore, in order to find the optimal program and prove it's optimality,
we have to expand the set of predicates $\cP$. 

To achieve this goal, the algorithm first selects an input-output example from
the noisy dataset $\cD$
on which can improve the difference between the abstract
loss and the concrete loss of programs using procedure $\mathsf{PickDimension}$.
Given an input-output example $(x, y)$,
the idea here is to expand the set of predicates $\cP$ to $\cP'$, such that:
\[
    L(\exec{p^*}^\cP x, y) < L(\exec{p^*}^{\cP'}x, y) \leq L(p^*[x], y)
\]
Thus improving our estimation of the abstract loss function for programs in $G$.

The algorithm allows us to plug any implementation of the procedure 
$\mathsf{PickDimension}$, assuming it satisfies the
following constraint:
\[
\begin{array}{c}
\tup{x_i, y_i} = \mathsf{PickDimension}\big(p, 
    (\vec{x}, \vec{y}), \cP, L\big)
\implies
    L(p[x_i], y_i) > L(\exec{p}^\cP x_i, y_i)
\end{array}
\]

Since $L(p^*[\vec{x}], \vec{y}) - L(\exec{p^*}^\cP\vec{x}, \vec{y}) > \epsilon$
(as $\mathsf{Distance}(p^*, \cD, \cP, L) > \epsilon$),
there exists at least one $i \in [1, n]$, such that,
\[
    L(p^*[x_i], y_i) > L(\exec{p^*}^\cP x_i, y_i)
\]
If multiple input-output examples exist for which the
abstract loss is less than the concrete loss, an implementation of 
$\mathsf{PickDimension}$ can return any one of them and our synthesis algorithm
will use that example to optimize the automaton.

Given the input-output example $(x, y)$, the algorithm uses the procedure
$\mathsf{OptimizeAndBackPropogate}$ to expand the set of predicates to $\cP'$,
such that
\[
    L(\exec{p^*}^\cP x, y) < L(\exec{p^*}^{\cP'}x, y) \leq L(p^*[x], y)
\]

\begin{figure}[tb]
\begin{algorithmic}[1]
    \Procedure{OptimizeAndBackPropagate}{$p, x_j, y_j, \cP, \cU, L$}
    \Statex {\bf input:} 
    Program $p$, input $x_j$, noisy output $y_j$, 
    predicates $\cP$, universe of predicates $\cU$, and Loss Function $L$.
    \Statex {\bf output:} A set of predicates $\cP_r$.
    \State $\cvp := \exec{p}^\cP x_j$;  
    $\phi := (s_0 = \exec{p}x_j)$; 
     \State $\Phi := \big\{ q \in \cU \vert \phi \implies q \big\}$;
     \State $\Psi := \Phi$;
    \For{$i = 1 \ldots m$} \Comment{Use a maximum of $m$ predictates.}
    \State $\Psi := \Psi \bigcup \big\{ 
    \psi \wedge q~\vert~\psi \in \Psi, q \in \Phi
    \big\}$;
    \EndFor
    \State $\psi^* := \phi$;
    \For{$\psi \in \Psi$ }
    \If{$\psi^* \implies \psi$ and 
    $L(\cvp, y_j) - L(\cvp \wedge \psi, y_j) \leq \delta > 0$
    }
$\psi^* := \psi$;
    \EndIf
    \Statex\Comment{The abstract loss is increased by atleast $\delta$.}
    \EndFor
   \State
    {\bf return} 
    $\mathsf{ExtractPredicates}(\psi^*) \bigcup
   \mathsf{BackPropagate}(p, x_j, \cvp \wedge \psi^*, \cP, \cU) $;
\EndProcedure
\end{algorithmic}
\caption{
    Algorithm for extracting predicates $\cP_r$ to refine the abstract value of $p$,
    such that, \\
    $L(\exec{p}^\cP x_j, y_j) < L(\exec{p}^{(\cP \bigcup \cP_r)} x_j, y_j)$.
}
\label{alg:constructproof}
\end{figure}

\begin{figure}[tb]
\begin{algorithmic}[1]
    \Procedure{BackPropogate}{$f(e_1, \ldots e_n), 
    x_j, \psi_p, \cP, \cU$}
    \Statex {\bf output:} A set of predicates $P_r$, such that,
    $\exec{f(e_1, \ldots e_n)}^{(\cP \cup \cP_r)}x_j \implies \psi_p$.
    \State 
    $\vec{\phi} := \tup{\exec{e_1} x_j, \ldots \exec{e_n}x_j}$;   
    $\vec{\cvp} := \tup{\exec{e_1}^\cP x_j, \ldots \exec{e_n}^\cP x_j}$;
    \State $\vec{\Phi} := \tup{\Phi_1, \ldots \Phi_n}$ where
    $\Phi_i := \big\{ q \in \cU \vert \phi_i \implies q \big\}$;
    $\vec{\Psi} := \vec{\Phi}$;
    \For{$i = 1 \ldots m$} \Comment{Use a maximum of $m$ predicates.}
    \For{$j=1, \ldots, n$}
    \State $\Psi_j := \Psi_j \bigcup \big\{ 
    \psi \wedge q~\vert~\psi \in \Psi_j, q \in \Phi_j
    \big\}$;
    \EndFor
    \EndFor
    \State $\vec{\psi^*} := \vec{\phi}$;
    \For{{\bf all } $\vec{\psi} = \tup{\psi_1, \ldots \psi_n}~\vert~\psi_i \in \Psi_i$ }
    \If{$\forall i=1, \ldots n.~\psi_i^* \implies \psi_i$ and 
    $\exec{f(\cvp_1 \wedge \psi_1, \ldots \cvp_n \wedge \psi_n)}^\# \implies
    \psi_p$}
    $\vec{\psi^*} := \vec{\psi}$;
    \EndIf
    \EndFor
    \State $\cP_r := \emptyset$;
    \For{$i = 1 \ldots n$}   
   \State 
    $\cP_r := \cP_r \bigcup \mathsf{ExtractPredicates}(\psi^*_i)$; 
    \If{$e_i \notin T$}
    $\cP_r := \cP_r 
    \bigcup  \mathsf{BackPropogate}(e_i, 
    x_j, \cvp_i \wedge \psi^*_i, \cP, \cU)$;
\EndIf
\EndFor
    \State {\bf return} $\cP_r$;
\EndProcedure
\end{algorithmic}
\caption{Algorithm to back propagate abstract value $\cvp \wedge \psi^*$ of expression 
    $e = f(e_1, \ldots e_k)$, such that,\\ $\exec{f(\cvp_1 \wedge \psi^*_1, \ldots
    \cvp_k \wedge \psi^*_k)}^\# \implies \psi_p$.}
\label{alg:backpropogate}
\end{figure}

Figure~\ref{alg:constructproof} presents the $\mathsf{OptimizeAndBackPropogate}$
procedure. The procedure tries to find the strongest formula $\psi^*$, 
such that, $(s_0 = p[x]) \implies \psi^*$ and:
\[
    L((\exec{p^*}^\cP x), y)  <  L((\exec{p^*}^\cP x) \wedge \psi^*, y)  
\]
Note that since $(s_0 = \exec{p^*}x) \implies \psi^*$ (line 7):
\[
    L((\exec{p^*}^\cP x) \wedge \psi^*, y) \leq L(p^*[x], y)
\]

\begin{theorem}\label{thm:backpropogate}

Given expression $e = f(e_1, \ldots e_n)$, input $x$, abstract value $\psi_p$ 
(assuming $(s = \exec{e}x) \implies \psi_p$, 
predicates $\cP$, and universe of predicates $\cU$,
if the procedure $\mathsf{BackPropogate}(e, x, \psi_p, \cP,
\cU)$ returns predicate set $\cP_r$ then: 
\[
    \exec{e}^{\cP \cup \cP_r} x \implies \psi_p
\]
    \eat{
    Let $\cP_r = \mathsf{BackPropogate}(e = f(e_1, \ldots e_k), x, \psi_p, \cP,
    \cU)$. If $\psi_p \implies (s = \exec{e} x)$ then 
    \[
        \exec{e}^{\cP \cup \cP_r} x \implies \psi_p
    \]
}
\end{theorem}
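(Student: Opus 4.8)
The plan is to prove the statement by structural induction on $e = f(e_1, \ldots, e_n)$, following the recursion of $\mathsf{BackPropogate}$ (Figure~\ref{alg:backpropogate}). Before the induction I would isolate three auxiliary facts. First, \emph{monotonicity in the predicate set}: if $\cP \subseteq \cP'$ then $\exec{e'}^{\cP'} x \implies \exec{e'}^{\cP} x$ for every subexpression $e'$, which follows because $\alpha^{\cP'}(\chi) \implies \alpha^{\cP}(\chi)$ (a larger predicate set yields at least as many conjuncts) and this propagates through the rules of Figure~\ref{fig:execabstract}. Second, \emph{monotonicity of the abstract transformer}: whenever $\cvp_i' \implies \cvp_i$ for all $i$, then $\exec{f(\cvp_1', \ldots, \cvp_n')}^\# \implies \exec{f(\cvp_1, \ldots, \cvp_n)}^\#$. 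Third, a \emph{capture} property of $\mathsf{ExtractPredicates}$: if $\psi$ is a conjunction of predicates from $\cU$ with $\mathsf{ExtractPredicates}(\psi) \subseteq \cP'$, then any $\chi$ with $\chi \implies \psi$ satisfies $\alpha^{\cP'}(\chi) \implies \psi$, since each conjunct of $\psi$ lies in $\cP'$ and is implied by $\chi$.

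The first real step is to show that the selection loop terminates with a tuple $\vec{\psi^*}$ satisfying the invariant $\exec{f(\cvp_1 \wedge \psi_1^*, \ldots, \cvp_n \wedge \psi_n^*)}^\# \implies \psi_p$. Here I would verify that the initialization $\psi_i^* = \phi_i = (s_i = \exec{e_i} x)$ already meets it: since $\exec{e_i} x \in \gamma(\cvp_i)$, the conjunction $\cvp_i \wedge \phi_i$ collapses to the single value $\exec{e_i} x$, so by the precision-on-singletons assumption $\exec{f(\cvp_1 \wedge \phi_1, \ldots, \cvp_n \wedge \phi_n)}^\# = (s = \exec{f(\exec{e_1}x, \ldots, \exec{e_n}x)}) = (s = \exec{e} x)$, which implies $\psi_p$ by the theorem's precondition. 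The loop overwrites $\vec{\psi^*}$ only when its guard reasserts the same implication, so the invariant is preserved to the end.

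Next I would establish, for each argument $i$, that $\exec{e_i}^{\cP \cup \cP_r} x \implies \cvp_i \wedge \psi_i^*$. The $\cvp_i$ conjunct follows from monotonicity in the predicate set. For the $\psi_i^*$ conjunct, note that $\psi_i^*$ is a conjunction drawn from $\Phi_i$, hence $(s_i = \exec{e_i}x) \implies \psi_i^*$. When $e_i$ is a terminal, $\mathsf{ExtractPredicates}(\psi_i^*) \subseteq \cP_r$, so applying the capture property to the Constant/Variable rule of Figure~\ref{fig:execabstract} gives the implication. When $e_i$ is not a terminal, the recursive call $\mathsf{BackPropogate}(e_i, x, \cvp_i \wedge \psi_i^*, \cP, \cU)$ has a valid precondition ($(s_i = \exec{e_i}x) \implies \cvp_i \wedge \psi_i^*$), and its induction hypothesis, combined with the parent's $\mathsf{ExtractPredicates}(\psi_i^*)$ and monotonicity in the predicate set, yields $\exec{e_i}^{\cP \cup \cP_r} x \implies \cvp_i \wedge \psi_i^*$. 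Writing $\cvp_i' = \exec{e_i}^{\cP\cup\cP_r} x$, transformer monotonicity then gives $\exec{f(\cvp_1', \ldots, \cvp_n')}^\# \implies \exec{f(\cvp_1 \wedge \psi_1^*, \ldots, \cvp_n \wedge \psi_n^*)}^\# \implies \psi_p$, and the capture property pushes this through the outer $\alpha^{\cP\cup\cP_r}$ to conclude $\exec{e}^{\cP\cup\cP_r} x \implies \psi_p$.

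I expect the main obstacle to be precisely this last push through $\alpha^{\cP\cup\cP_r}$: because $\alpha$ \emph{weakens}, the inequality $\exec{f(\cvp_1', \ldots, \cvp_n')}^\# \implies \psi_p$ does not survive abstraction unless every atom of $\psi_p$ already resides in $\cP \cup \cP_r$. The procedure emits $\mathsf{ExtractPredicates}$ for its \emph{subgoals} $\psi_i^*$ but never for its own incoming target $\psi_p$, so I would strengthen the induction hypothesis to carry the assumption that the atoms of $\psi_p$ are supplied by the caller --- exactly the discipline both call sites obey, since $\mathsf{OptimizeAndBackPropogate}$ emits $\mathsf{ExtractPredicates}(\psi^*)$ alongside the top-level call and each recursive step emits $\mathsf{ExtractPredicates}(\psi_i^*)$ alongside the call on $e_i$. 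A secondary subtlety is that transformer monotonicity is invoked although the text assumes only soundness and precision-on-singletons; I would either restrict to the best (monotone) transformer or add monotonicity as an explicit standing hypothesis.
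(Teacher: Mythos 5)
Your proof follows the same route as the paper's: induction on the height of $e$ tracking the recursion of $\mathsf{BackPropogate}$, establishing the loop invariant $\exec{f(\cvp_1 \wedge \psi_1^*, \ldots, \cvp_n \wedge \psi_n^*)}^\# \implies \psi_p$ and the per-argument facts $\exec{e_i}^{\cP\cup\cP_r}x \implies \cvp_i \wedge \psi_i^*$, then combining them. The extra care you take is warranted rather than redundant: the paper's closing ``therefore'' silently relies on both transformer monotonicity and on the atoms of $\psi_p$ being present in $\cP \cup \cP_r$ (which holds only because every call site emits $\mathsf{ExtractPredicates}$ of the target it passes down), and these are exactly the two points your strengthened induction hypothesis and explicit monotonicity assumption make precise.
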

We present the proof of this theorem in the appendix~\ref{sec:appendixproofs}
(Theorem~\ref{thm:appendixbackpropogate}).
\eat{
\begin{proof}
    We prove this theorem using induction over height of expression $e$.
    
    \noindent{\it Base Case:} Height of $e$ is $2$. This means all sub-expressions 
$e_1, \ldots e_k$ are terminals.
    Note that $\cP_r \subseteq \mathsf{ExtractPredicates}(\psi^*_i)$, for all $i
    \in [1, k]$.
    \[
        \exec{e_i}^{\cP \cup \cP_r} \implies \cvp_i \wedge \psi^*_i
    \]
    and
    \[
 \exec{f(\cvp_1 \wedge \psi^*_1, \ldots
    \cvp_k \wedge \psi^*_k)}^\# \implies \psi_p      
    \]
    therefore
    \[
        \exec{e}^{\cP \cup \cP_r} x \implies \psi_p
    \]

    \noindent{\it Induction Hypothesis:} For all expressions $e$ of height less
    than equal to $n$, the following is true:
    \[
        \exec{e}^{\cP \cup \cP_r} x \implies \psi_p
    \]

    \noindent{\it Induction Step:} Let $e = f(e_1, \ldots e_k)$ be an expression
    of height equal to $n+1$.
    The height of expressions $e_1, \ldots e_k$ is less than equal to $n$.

    Note that $\cvp_i \wedge \psi^*_i \implies \exec{e_i}x$ (line-7 and line-9).
    And since $\mathsf{BackPropogate}(e_i,
    x, \cvp_i \wedge \psi^*_i, \cP, \cU) \subseteq \cP_r$, using induction
    hypothesis:
    \[
        \exec{e_i}^{\cP \cup \cP_r} \implies \cvp_i \wedge \psi^*_i
    \]
    and 
    \[
 \exec{f(\cvp_1 \wedge \psi^*_1, \ldots
    \cvp_k \wedge \psi^*_k)}^\# \implies \psi_p      
    \]
    therefore
    \[
        \exec{e}^{\cP \cup \cP_r} x \implies \psi_p
    \]
\end{proof}
}

\begin{theorem}\label{thm:optimize}
    Let $\cP_r = \mathsf{OptimizeAndBackPropogate}(p^*, x, y, \cP, \cU)$.
    \[
        L(\exec{p^*}^\cP x, y) < L(\exec{p^*}x, y) 
    \implies  
    L(\exec{p^*}^\cP x, y) < L(\exec{p^*}^{(\cP \bigcup \cP_r)}x, y) 
    \]
\end{theorem}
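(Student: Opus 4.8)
The plan is to reduce the statement to the correctness of $\mathsf{BackPropogate}$ (Theorem~\ref{thm:backpropogate}) together with a short invariant analysis of the loop in Figure~\ref{alg:constructproof}. Following the procedure, write $\cvp := \exec{p^*}^\cP x$ and $\phi := (s_0 = \exec{p^*}x)$. Two loss identities set up the argument: since $\gamma(\phi)$ is the singleton $\{\exec{p^*}x\}$ we have $L(\phi, y) = L(\exec{p^*}x, y)$, and by definition $L(\cvp, y) = L(\exec{p^*}^\cP x, y)$, so the hypothesis of the theorem reads simply $L(\cvp, y) < L(\phi, y)$. I would first record the soundness fact that the concrete output lies in the concretization of the abstract output, $\exec{p^*}x \in \gamma(\cvp)$ — the same property underlying the inequality $L(\exec{p}^\cP\vec{x}, \vec{y}) \le L(p[\vec{x}], \vec{y})$ used in the $\mathsf{MinCost}$ discussion, which follows by induction from soundness of the abstract semantics. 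Consequently $\phi \implies \cvp$, so $\gamma(\cvp \wedge \phi) = \gamma(\phi)$ and hence $L(\cvp \wedge \phi, y) = L(\phi, y) > L(\cvp, y)$.

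Next I would analyze the loop that maintains $\psi^*$. The key invariant is $\phi \implies \psi^*$: it holds initially because $\psi^*$ is set to $\phi$, and it is preserved because every reassignment replaces $\psi^*$ by some $\psi$ with $\psi^* \implies \psi$, so transitivity keeps $\phi \implies \psi^*$. The second fact to establish is that at termination $L(\cvp \wedge \psi^*, y) > L(\cvp, y)$. If $\psi^*$ was never reassigned, then $\psi^* = \phi$ and this is exactly the inequality derived above from the hypothesis; otherwise the guard of the last successful reassignment (the $\delta > 0$ condition, whose intent is that intersecting with $\psi$ strictly raises the abstract loss) guarantees the inequality for the final $\psi^*$. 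In either case the claim holds at the point where $\cvp \wedge \psi^*$ is passed to the back-propagation call.

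Finally I would invoke Theorem~\ref{thm:backpropogate}. From $\phi \implies \psi^*$ and $\phi \implies \cvp$ we obtain the required precondition $(s_0 = \exec{p^*}x) \implies \cvp \wedge \psi^*$, so applying the theorem to $\mathsf{BackPropogate}(p^*, x, \cvp \wedge \psi^*, \cP, \cU)$ with returned set $\cP_b$ yields $\exec{p^*}^{\cP \cup \cP_b} x \implies \cvp \wedge \psi^*$. Since $\cP_b \subseteq \cP_r$ and abstract execution is monotone in the predicate set (enlarging $\cP$ can only sharpen each $\alpha^\cP$, and the rules of Figure~\ref{fig:execabstract} propagate this monotonically), we also get $\exec{p^*}^{\cP \cup \cP_r} x \implies \cvp \wedge \psi^*$. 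Passing to concretizations, $\gamma(\exec{p^*}^{\cP \cup \cP_r} x) \subseteq \gamma(\cvp \wedge \psi^*)$, so the minimum defining the abstract loss is taken over a smaller set and
\[
    L(\exec{p^*}^{\cP \cup \cP_r} x, y) \ge L(\cvp \wedge \psi^*, y) > L(\cvp, y) = L(\exec{p^*}^\cP x, y),
\]
which is precisely the conclusion $L(\exec{p^*}^\cP x, y) < L(\exec{p^*}^{(\cP \cup \cP_r)} x, y)$.

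I expect the main obstacle to be the bookkeeping around $\psi^*$ rather than any deep argument: one must be careful that the \emph{final} value of $\psi^*$ — possibly the untouched initializer $\phi$, for which the loss increase is supplied only by the theorem's antecedent — still satisfies the strict loss-increase, and that the precondition of Theorem~\ref{thm:backpropogate} is discharged from the invariant $\phi \implies \psi^*$ combined with soundness. A minor edge case worth a remark is when $p^*$ is a bare terminal, where the refinement is delivered by $\mathsf{ExtractPredicates}(\psi^*)$ rather than by the recursive back-propagation; once these points are settled, all the substantive work is delegated to Theorem~\ref{thm:backpropogate} and to monotonicity of the abstraction.
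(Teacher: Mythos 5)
Your proposal is correct and follows essentially the same route as the paper's own proof: the same case split on whether $\psi^*$ was reassigned in the loop (using the theorem's hypothesis when $\psi^* = (s_0 = \exec{p^*}x)$ and the loop guard otherwise), followed by an appeal to Theorem~\ref{thm:backpropogate} and the observation that a stronger abstract value can only increase the abstract loss. The only difference is that you make explicit some steps the paper leaves implicit (the invariant $\phi \implies \psi^*$, the discharge of the precondition of Theorem~\ref{thm:backpropogate}, and monotonicity of abstract execution in the predicate set, needed because $\cP_r$ strictly contains the set returned by $\mathsf{BackPropogate}$), which are refinements rather than a different argument.
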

\begin{proof} 
    Let $\cvp = \exec{e}^\cP x$ and $\psi^*$ be the abstract value from which
    predicates are extracted (line 10, Figure~\ref{alg:constructproof}).
    If $\psi^*$ was assigned by the $\mathsf{if}$ condition (line 9), then
    \[
        L(\exec{p^*}^\cP x, y) < L(\cvp \wedge \psi^*, y)  
        \]
        However if $\psi^*$ was not assigned on line 9, then $\psi^* = (s_0 =
        p[x])$, and the following is true:
    \[
         L(\exec{p^*}^\cP x, y) < L(\exec{p^*}x, y) = L(\cvp \wedge \psi^*, y)       
    \]
    From Theorem~\ref{thm:backpropogate}, 
    \[
        \exec{p^*}^{\cP \cup \cP_r} \implies \cvp \wedge \psi^* 
    \]
    This implies
    $
        L(\cvp \wedge \psi^*, y) \leq L(\exec{p^*}^{(\cP \bigcup \cP_r)}x, y) 
    $ 
    and therefore
    $
        L(\exec{p^*}^\cP x, y) < L(\exec{p^*}^{(\cP \bigcup \cP_r)}x, y)  
    $.
\end{proof}

\eat{
\begin{theorem}\label{thm:progress}{\bf (Progress)}
    Let $\cA_i$ be the FTA constructed in the $i^{th}$ iteration of
    Algorithm~\ref{alg:top-level}. Let $\cP_i$ be the set of predicates
    and let $p_i$ be the program returned 
    by function $\mathsf{MinCost}$ on the $i^{th}$ iteration.
    Let 
    Let $P_{i, k} \subseteq G$ be the set of programs, such that, $p \in P_{i,
    k}$ if and
    only if 
    \[
        \tup{L(\exec{p}^{\cP_{i+1}}\vec{x}, \vec{y}), C(p)} 
        \leq_U \tup{L(\exec{p_i}^{\cP_i}\vec{x}, \vec{y}), C(p_i)} 
    \]
    If for all $p \in P_{i,i}. \mathsf{Distance}(p, \cD, \cP, L) > \epsilon$
    \[
        P_{i, i+1} \subset P_{i, i} 
    \]
\end{theorem}
\begin{proof}
    The program $p_i \in P_{i, i}$.
    If for all $p \in P_{i,i}. \mathsf{Distance}(p, \cD, \cP, L) > \epsilon$, 
    then $\mathsf{Distance}(p_i, \cD, \cP, L) > \epsilon$.
    Therefore from Theorem~\ref{thm:optimize},
    \[
        L(\exec{p^*}^{\cP_i}\vec{x}, \vec{y}) <
        L(\exec{p^*}^{\cP_{i+1}}\vec{x}, \vec{y})  
    \]
Therefore $P_{i+1} \subseteq P_i - \{p_i\} \subset P_i$.
\end{proof}
}

\begin{theorem}\label{thm:completeness}{\bf (Completeness)}
    Given a Dataset $\cD$, a DSL $G$, tolerance $\epsilon \geq 0$, universe of
    predicates $\cU$, initial predicates $\cP$, objective function $U$, loss
    function $L$, and the complexity measure $C$, 
    the algorithm~\ref{alg:top-level} will eventually return some program $p_r$. 
\end{theorem}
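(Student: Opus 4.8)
The plan is to prove termination by a monotonicity-and-boundedness argument on the predicate set $\cP$ maintained by the refinement loop (lines 3--10 of Algorithm~\ref{alg:top-level}). The guiding intuition is that every iteration that does \emph{not} exit must strictly enlarge $\cP$, whereas $\cP$ can never grow beyond the fixed universe $\cU$; since $\cU$ is finite, only finitely many non-terminating iterations are possible, so the loop must eventually reach its exit test and return $p_r$.

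First I would record that $\cU$ is finite. Every predicate in $\cU$ has the form $f(s)~\mathsf{op}~c$ with $f \in \cF$, $\mathsf{op} \in \cO$, $c \in \cC$, and $s$ a DSL symbol. Since $\cC$ need only contain the values computable by subexpressions of $G$, and the AFTA construction bounds expression height by $b$, there are finitely many such values; with $\cF$, $\cO$, and the symbol set finite, $\cU$ is finite. Because the predicates returned by $\mathsf{OptimizeAndBackPropogate}$ are obtained through $\mathsf{ExtractPredicates}$ from conjunctions built out of members of $\cU$, we always have $\cP \subseteq \cU$, so $|\cP| \le |\cU| < \infty$ throughout the run.

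Next I would show that each non-terminating iteration strictly enlarges $\cP$. When the loop does not exit, $\mathsf{Distance}(p^*, \cD, \cP, L) > \epsilon \ge 0$, i.e.\ $L(p^*[\vec{x}], \vec{y}) - L(\exec{p^*}^\cP \vec{x}, \vec{y}) > 0$; hence at least one coordinate satisfies $L(p^*[x_i], y_i) > L(\exec{p^*}^\cP x_i, y_i)$, so $\mathsf{PickDimension}$ can honor its contract and return a pair $(x, y)$ with $L(\exec{p^*}^\cP x, y) < L(p^*[x], y) = L(\exec{p^*}x, y)$. This is exactly the hypothesis of Theorem~\ref{thm:optimize}, which then yields $L(\exec{p^*}^\cP x, y) < L(\exec{p^*}^{(\cP \cup \cP_r)}x, y)$ for $\cP_r = \mathsf{OptimizeAndBackPropogate}(p^*, x, y, \cP, \cU)$. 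A strict change in the abstract loss forces $\exec{p^*}^{(\cP \cup \cP_r)}x \ne \exec{p^*}^\cP x$, which would be impossible if $\cP_r \subseteq \cP$; therefore $\cP \cup \cP_r \supsetneq \cP$, and the predicate set strictly grows.

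Combining these two steps, the loop can perform at most $|\cU| - |\cP|$ non-terminating iterations. The remaining point --- and the step I expect to be the main obstacle --- is to rule out the degenerate possibility that $\cP$ climbs all the way to $\cU$ while the exit test still fails. I would close this gap by proving that once $\cP = \cU$ the abstraction is \emph{exact}, so that $\mathsf{Distance}(\cdot, \cD, \cU, L) = 0 \le \epsilon$ and the loop is guaranteed to exit. Concretely, I would argue by induction on program structure that $\gamma(\exec{p}^\cU x_j) = \{p[x_j]\}$: at the leaves, the predicates $(x = x_j)$ and $(t = \exec{t}x_j)$ lie in $\cU$ (identity $\in \cF$, equality $\in \cO$, and the relevant values $\in \cC$), so $\alpha^\cU$ keeps them singletons; at an internal node $f(e_1,\ldots,e_k)$, the induction hypothesis makes each argument a singleton $(s_i = v_i)$, and the precision-on-singletons assumption on the abstract semantics gives $\exec{f(s_1 = v_1, \ldots, s_k = v_k)}^\# = (s = \exec{f(v_1,\ldots,v_k)})$, again a singleton preserved by $\alpha^\cU$. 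Exactness of the abstract value then yields $L(\exec{p}^\cU\vec{x}, \vec{y}) = L(p[\vec{x}], \vec{y})$ for \emph{every} program, so the termination condition necessarily holds at $\cP = \cU$, and the algorithm returns some $p_r$.
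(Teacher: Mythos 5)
Your proof is correct in its essentials but takes a genuinely different route from the paper's. The paper's proof uses the quantitative progress guarantee built into $\mathsf{OptimizeAndBackPropogate}$: each non-terminating iteration drives $\mathsf{Distance}(p_i,\cD,\cdot,L)$ for the current candidate either to $0$ or down by at least the fixed quantum $\delta>0$, and since $\mathsf{Distance}$ is non-increasing under predicate addition (Theorem~\ref{thm:execp}) and the bounded-height program space is finite, this well-founded measure must bottom out. You instead use $|\cP|$ as the progress measure: every non-terminating iteration strictly enlarges $\cP$ (your observation that a strict increase in abstract loss forces $\cP_r\not\subseteq\cP$ is exactly the right way to get strictness), and $\cP$ lives inside a fixed finite set. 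Your route has the advantage of not depending on the $\delta$ threshold at all --- it tolerates arbitrarily small strict improvements --- and it yields an explicit iteration bound. Two remarks. First, the step you flag as the main obstacle (exactness of the abstraction at $\cP=\cU$) is actually unnecessary: by the contrapositive of your own strict-growth claim, any iteration in which $\cP$ \emph{cannot} grow must pass the exit test, so strict growth plus boundedness already gives termination without analyzing the top element. Second, and this is the one soft spot, your claim that $\cU$ itself is finite is not warranted by the paper's assumptions: $\cC$ is only required to \emph{include} the computable values, and the implemented universe in Section~\ref{sec:implementation} ($\mathsf{len}(s)=i$ for all $i\in\mathds{N}$, etc.) is literally infinite. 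What you need, and what does hold in the instantiations considered, is that the set of predicates the algorithm can ever \emph{add} is finite: every predicate returned by $\mathsf{ExtractPredicates}$ lies in some $\Phi=\{q\in\cU\mid (s=v)\implies q\}$ for one of the finitely many concrete values $v$ produced by bounded-height subexpressions on the finitely many inputs, and each such $\Phi$ is finite for the universes the paper uses. With the finiteness claim restated for this reachable fragment, and the exactness step dropped in favor of the contrapositive argument, your proof is complete.
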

\begin{proof}
    Let $\cA_i$ be the FTA constructed in the $i^{th}$ iteration of
    Algorithm~\ref{alg:top-level}. Let $\cP_i$ be the set of predicates
    and let $p_i$ be the program returned 
    by function $\mathsf{MinCost}$ on the $i^{th}$ iteration.
    From Theorem~\ref{thm:optimize},
    if $\mathsf{Distance}(p_i, \cD, \cP_i, L) > \epsilon)$, then for all $k >
    i$:
    \[
        \mathsf{Distance}(p_i, \cD, \cP_{i+1}, L) = 0 \text{ or } 
        \mathsf{Distance}(p_i, \cD, \cP_i, L) -  
        \mathsf{Distance}(p_i, \cD, \cP_{i+1}, L) \geq \delta > 0 
    \]
    Since we restrict the size of the AFTA (by only considering programs of
    height less than equal to some bound).
    We will reduce the $\mathsf{Distance}$ for some program in $G$ in every
    iteration. 
By induction, we will eventually for some $k$, 
    \[
    \mathsf{Distance}(p_k, \cD, \cP_k, L) \leq \epsilon \] 
\end{proof}

\begin{theorem} {\bf (Correctness)}
    Given a dataset $\cD$, a DSL $G$, tolerance $\epsilon \geq 0$, universe of
    predicates $\cU$, initial predicates $\cP$, objective function $U$, loss
    function $L$, and the complexity measure $C$, 
    the algorithm~\ref{alg:top-level} will return a program $p_r$ which
    satisfies the {\it $\epsilon$-correctness}
    condition~\ref{def:epsiloncorrectness}. 
\end{theorem}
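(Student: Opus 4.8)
The plan is to recognize that this Correctness theorem is simply the conjunction of the two results already established for Algorithm~\ref{alg:top-level}: Completeness (Theorem~\ref{thm:completeness}) and Soundness (Theorem~\ref{thm:soundness}). No new machinery is required; the whole argument is a short chaining of these two facts, so I expect the proof to be essentially immediate once both prerequisites are in hand.

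First I would invoke Completeness to settle termination. Theorem~\ref{thm:completeness} guarantees that the refinement loop cannot iterate forever: because the height bound on the AFTA keeps the relevant set of candidate programs finite, and because each iteration that fails the termination test drives down $\mathsf{Distance}$ for some program by at least a fixed $\delta > 0$ (via Theorem~\ref{thm:optimize}), after finitely many iterations the guard $\mathsf{Distance}(p^*, \cD, \cP, L) \leq \epsilon$ must hold and the algorithm returns the stored best candidate $p_r$. This supplies an actual returned program to reason about. Next I would apply Soundness: Theorem~\ref{thm:soundness} states that whenever Algorithm~\ref{alg:top-level} returns a program, that program satisfies the $\epsilon$-correctness condition of Definition~\ref{def:epsiloncorrectness}. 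Since Completeness furnishes such a returned program $p_r$, Soundness immediately certifies that $p_r$ is $\epsilon$-correct, which is exactly the claim.

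The only point requiring care --- and the nearest thing to an obstacle --- is purely bookkeeping: confirming that the object named $p_r$ in the Completeness statement is literally the value returned on line~7, and that this coincides with the ``returned program'' to which Soundness refers. Both denote the running best candidate maintained across iterations, updated so as to always satisfy $\tup{L(p_r[\vec{x}], \vec{y}), C(p_r)} \leq_U \tup{L(p^*[\vec{x}], \vec{y}), C(p^*)}$, so the identification is unambiguous. All of the genuinely substantive work --- the optimality of $\mathsf{MinCost}$ (Theorem~\ref{thm:argmin}), the back-propagation guarantee (Theorem~\ref{thm:backpropogate}), and the strict loss improvement per refinement (Theorem~\ref{thm:optimize}) --- is already encapsulated inside the two cited theorems, so nothing new needs to be reproven here.
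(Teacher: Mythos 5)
Your proposal matches the paper's own proof exactly: it cites Theorem~\ref{thm:completeness} to guarantee that the algorithm terminates and returns some program $p_r$, then cites Theorem~\ref{thm:soundness} to conclude that any returned program is $\epsilon$-correct. The bookkeeping point you raise about $p_r$ being the running best candidate is handled the same way implicitly in the paper, so nothing further is needed.
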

\begin{proof}
    From Theorem~\ref{thm:completeness}, algorithm~\ref{alg:top-level} will
    eventually terminate and return a program $p_r$. From
    Theorem~\ref{thm:soundness}, the returned program $p_r$ will satisfy the
    {\it $\epsilon$-correctness} condition.
\end{proof}

\section{\sys{}~ Implementation}\label{sec:implementation}
We have implemented our synthesis algorithm in a tool
called \sys. \sys{}~ is written in Java. The implementation is modular and allows a user to plug-in
different DSLs, abstract semantics, loss functions, objective functions, and
complexity measures. To support the experiments presented in Section~\ref{sec:results},
we instantiate the \sys{}~ implementation with the string-processing domain-specific
language from\cite{wang2017program, handa2020inductive}.

\noindent{\bf Domain Specific Language and Abstractions:}
We use the string processing domain specific language
from~\cite{wang2017program, handa2020inductive} (Figure~\ref{fig:sygusdsl}),
which supports extracting substrings (using the
$\mathsf{SubStr}$ function) of the input string
$x$ and concatenation of substrings (using the $\mathsf{Concat}$ function).
The function $\mathsf{SubStr}$ function extracts a substring using a start and an end
position. A position can either be a constant index ($\mathsf{ConstPos}$) or 
the start or end of the $k^{th}$ occurrence of the match token $\tau$ 
in the input string ($\mathsf{Pos}$).

\begin{figure}
\[
\begin{array}{rcl}
\text{String expr } e &:=& \mathsf{Str}(f)~\vert~\mathsf{Concat}(f, e);\\
\text{Substring expr } f &:=& \mathsf{ConstStr}(s)~\vert~\mathsf{SubStr}(x, p_1,
p_2);\\
\text{Position } p &:=& \mathsf{Pos}(x, \tau, k,
    d)~\vert~\mathsf{ConstPos}(k);\\
    \text{Direction } d &:=& \mathsf{Start}~\vert~\mathsf{End};
\end{array}
\]
\caption{DSL for string transformations where $\tau$ represents a token, $k$ is
an integer, and $s$ is a string constant.}
\label{fig:sygusdsl}
\end{figure}

\begin{figure}
\[
\begin{array}{rcl}
\exec{f(s_1 = c_1, \ldots, s_k = c_k)}^\# &:=& (s = \exec{f(c_1, \ldots c_k)})\\
\exec{\mathsf{Concat}(\mathsf{len}(f) = i_1,\mathsf{len}(e) = i_2 )}^\# 
&:=& (\mathsf{len}(e) = (i_1 + i_2))\\
\exec{\mathsf{Concat}(\mathsf{len}(f) = i_1, e[i_2] = c )}^\# 
&:=& (e[i_1 + i_2] = c)\\
\exec{\mathsf{Concat}(\mathsf{len}(f) = i, e = c )}^\# 
&:=& (\mathsf{len}(e) = (i + \mathsf{len}(c)) \wedge
 \bigwedge\limits^{\mathsf{len}(c)}_{j=1} e[i + j-1] = c[j-1]\\
\exec{\mathsf{Concat}(f[i] = c, p)}^\# 
&:=& (e[i] = c)\\
\exec{\mathsf{Concat}(f = c,\mathsf{len}(e) = i )}^\# 
&:=& (\mathsf{len}(e) = (\mathsf{len}(c) + i)) \wedge 
\bigwedge\limits^{\mathsf{len}(c)}_{j=1} e[j-1] = c[j-1]\\
\exec{\mathsf{Concat}(f = c_1, e[i] = c_2)}^\# 
&:=& (e[\mathsf{len}(c_1) + i] = c_2) \wedge 
\bigwedge\limits^{\mathsf{len}(c_1)}_{j=1} e[j-1] = c_1[j-1]\\
\exec{\mathsf{Str}(p)}^\# &:=& p
\end{array}
\]
\caption{Abstract Semantics for String Transformation DSL.}
\label{fig:sygusabstract}
\end{figure}

\noindent{\bf Universe of Predicates:}
We construct a universe of predicates 
using predicates of the form $\mathsf{len}(s) = i$, where $s$ is a symbol of a
type of string and $i$ presents an integer. 
We also include predicates of the form $s[i] = c$ indicating the $i^{th}$
character of string $s$ is $c$.
Besides these predicates, we also include predicates of the form $s = c$, where
$c$ is a value which a symbol $s$ can take. We also include both $\mathsf{true}$
and $\mathsf{false}$. In summary, the universe of predicates, we are using, is:
\[
\cU = \big\{\mathsf{len}(s) = i~\vert~i \in \mathds{N}\big\} \cup \big\{
s[i] = c~\vert~i \in \mathds{N}, c \in \mathsf{Char}\big\} \cup
\big\{s = c~\vert~c \in \mathsf{Type}(s)\big\} \cup \big\{\mathsf{true}, \mathsf{false}\big\}
\]

\noindent{\bf Abstract Semantics:}
We define a generic transformer for 
conjunctions of predicates as follows:
\[
    f\bigg((\bigwedge\limits_{i_1}p_{i_1}), \ldots,
    (\bigwedge\limits_{i_k}p_{i_k}) \bigg) := \bigsqcap\limits_{i_1} 
    \ldots \bigsqcap\limits_{i_k} f(p_{i_1}, \ldots p_{i_k})
\]
This allows us to just define an abstract semantics for every possible
combination of atomic predicates, instead of abstract semantics for all possible
abstract values.
Figure~\ref{fig:sygusabstract} presents the abstract semantics for functions in
string processing DSL for all possible combinations of atomic predicates.

\noindent{\bf Initial Abstraction:}
The initial abstraction set $\cP$ includes predicates of form $\mathsf{len}(s) =
i$, where $s$ is a symbol of type string and $i$ is an integer. It also includes
$\mathsf{true}$ and $\mathsf{false}$.

\noindent{\bf Abstractions and Loss Functions:}
We present the abstract version of the $0/\infty$ Loss Function and $0/1$ 
Loss Function below:
\begin{equation*}
    L_{0/\infty}(\cvp, y) = 0 \text{ if } y \in \gamma(\cvp), \infty \text{
        otherwise} 
    \text{ and }
    L_{0/1}(\cvp, y) = 0 \text{ if } y \in \gamma(\cvp), 1 \text{
        otherwise} 
\end{equation*}
If $\cvp \neq \mathsf{false}$ ($L_{DL}(\mathsf{false}, y) = \infty$), the
abstract version of the Damerau-Levenshtein is 
$L_{DL}(\cvp, y) = d_{c, y}(|c|, |y|)$, where $c = \mathsf{ToStr}(\cvp, y)$
and $d$ is defined below:

\begin{equation*}
    d_{c, y}(i, j) = \min \begin{cases}
        j & i = 0\\
        i & j = 0\\
        d_{c, y}(i-1, j-1) & i, j > 0 \text{ and } (c[i-1] = \mathsf{null} \text{
            or } c[i-1] = y[j-1])\\
        1 + d_{c, y}(i-1, j-1) & i, j > 0 \text{ and } (c[i-1] \neq \mathsf{null} \text{
            and } c[i-1] \neq y[j-1])\\
        1 + d_{c, y}(i-1, j) & i > 0 \\ 
        1 + d_{c, y}(i, j-1) & j > 0 \\
        d_{c, y}(i, j-1) & i = |y| \text{ and } \cvp \text{ may contain strings
        of multiple lengths.}\\
        1 + d_{c, y}(i-2, j-2) & i, j > 1  \text{ and } (c[i-1] = \mathsf{null}
        \text{ or } c[i-1] = y[i-2]) \\ & \text{ and } 
(c[i-2] = \mathsf{null}
        \text{ or } c[i-2] = y[i-1]) 
   \end{cases}
\end{equation*}
Let $\cP = \mathsf{ExtractPredicates}(\cvp)$. 
The procedure $\mathsf{ToStr}$ returns an array $c$, such that, 
if $\mathsf{len}(s) = i \in \cP$ then $|c| = i$, otherwise it is the maximum of
the length of string $y$ or $i$ such that $s[i] = c' \in \cP$.
For all $s[i] = c_i \in \cP$, $c[i] = c_i$, otherwise it is $\mathsf{null}$.

In addition to the loss functions introduced in Section~\ref{sec:preliminaries} 
use the following loss functions:
\begin{definition} {\bf 1-Delete Loss Function:} The 1-Delete loss function 
returns 0 if the outputs from the synthesized program and the data set 
match exactly, 1 if a single
deletion enables the output from the synthesized program to match
the output from the data set, and $\infty$ otherwise: 
\begin{equation*}
    L_{1D}(z, y) = \begin{cases}
        0 & z = y\\
        1 & a \cdot c \cdot b = z \wedge a \cdot b = y \wedge 
\vert c \vert = 1\\ 
        \infty & \text{ otherwise}
    \end{cases}
\end{equation*}
\end{definition}
The abstract version of the 1-Delete Loss Function:
\begin{equation*}
    L_{1D}(\cvp, y) = \begin{cases}
        0
         & y \in \gamma(\cvp)  \\
         1 & a\cdot b = y \text{ and } (\exists c. a \cdot c \cdot b \in
         \gamma(\cvp) \text{ and } |c| = 1 )\\
        \infty & \text{otherwise}
    \end{cases}
\end{equation*}

\begin{definition}
{\bf $n$-Substitution Loss Function:}  The $n$-Substitution loss function 
counts the number of positions where the noisy output
does not agree with the output from the synthesized program.  If the synthesized
program produces an output that is longer or shorter than the output in the 
noisy data set, the loss function is $\infty$:
    \begin{equation*}
    L_{nS}(z, y) = \begin{cases}
        \infty & \vert z \vert \neq \vert y \vert \\
        \sum\limits_{i=1}^{\vert z \vert} 1 \mbox{ if } z[i] \neq y[i] \mbox{ else } 0
    & \vert z \vert = \vert y \vert 
    \end{cases}
    \end{equation*}
\end{definition}
The abstract version of the $n$-Substitution loss function is:
\begin{equation*}
    L_{nS}(\cvp, y) = \begin{cases}
        0 & \cvp = \mathsf{true}\\
        \infty & \cvp = \mathsf{false}\\
        L_{nS}(c, y) & \cvp = (s = c)\\
         \infty  & c_\cvp = \mathsf{ToStr}(\cvp, y)
        \text{ and } |c_\cvp| \neq |y|
         \\
        \sum\limits_{j=1}^{|y|} \mathds{1}(c[j] \neq \mathsf{null} \text{ and }c[j] \neq y[i_j]) 
        & c = \mathsf{ToStr}(\cvp, y)
        \text{ and } |c| = |y|
   \end{cases}
\end{equation*}

\noindent{\bf Incremental Automata Update:} To avoid regenerating the entire
FTA at every iteration of the algorithm as in Algorithm~\ref{alg:top-level}, our
\sys{}~ implementation applies on optimization that incrementally updates parts
of the FTA as appropriate at every iteration of the algorithm. 

\section{Experimental Results}
\label{sec:results}

We use the SyGuS 2018 benchmark suite~\cite{alur2013syntax} to evaluate 
\sys~ against the current state-of-the-art noisy program synthesis
system presented in \cite{handa2020inductive}.
The SyGus 2018 benchmark suite contains a range of string transformation problems,
a class of problems that has been extensively studied in past
program synthesis projects~\cite{gulwani2011automating, polozov2015flashmeta, singh2016transforming}.
\cite{handa2020inductive} use this benchmark suite to benchmark their
system by systematically introducing noise within these benchmarks. We recreated
the scenarios studied in~\cite{handa2020inductive} and report results for 
these scenarios. 

We run all experiments on a 3.00 GHz Intel(R) Xeon(R) CPU E5-2690 v2 machine
with 512GB memory running Linux 4.15.0. We set a timeout limit of 10 minutes for
each synthesis task. We compare \sys{}~ with the noisy program 
synthesis system presented in~\cite{handa2020inductive} 
(see Section~\ref{sec:preliminaries}),
running with a {\it bounded scope height threshold of $4$} for all experiments
(\cite{handa2020inductive}, Section 9.1). We call this system CFTA.

\subsection{Noisy Data Sets}

\begin{figure}
    {\small
    \centering
    \begin{tabular}{|l|c|c|c|c|c|c|c|c|}
    \hline
        \multirow{3}[0]{*}{Benchmark} & \multirow{3}[0]{*}{No of examples } &
        \multicolumn{6}{c|}{Rose}                      & \multirow{3}[0]{*}{CFTA} \\
          \cline{3-8}
          &       & \multicolumn{3}{c|}{$n=1$} & \multicolumn{3}{c|}{$n=3$} &  \\
          \cline{3-8}
          &       & $L_{0/1}$   & $L_{DL}$    & $L_{1D}$    & $L_{0/1}$   & $L_{DL}$   & $L_{1D}$   &  \\
    \hline
    bikes & 6     & 68    & 67    & 65    & 67    & 67    & 68    & 19554 \\
    bikes\_small & 6     & 69    & 68    & 65    & 67    & 65    & 65    & 21210 \\
    dr-name & 4     & 402   & 310   & 408   & 359   & 324   & 390   & - \\
    dr-name\_small & 4     & 424   & 312   & 376   & 330   & 302   & 375   & - \\
    firstname & 4     & 118   & 105   & 114   & 96    & 319   & 367   & 4258 \\
    firstname\_small & 4     & 113   & 106   & 118   & 96    & 300   & 393   & 4220 \\
    initials & 4     & 336   & 289   & 353   & 244   & -     & -     & 36188 \\
    initials\_small & 4     & 361   & 293   & 343   & 249   & -     & -     & 30920 \\
    lastname & 4     & 124   & 113   & 120   & 97    & 114   & 121   & 175762 \\
    lastname\_small & 4     & 122   & 114   & 120   & 97    & 116   & 121   & 178825 \\
    name-combine & 6     & 1288  & 918   & 1609  & 1330  & 984   & 1477  & - \\
    name-combine\_short & 6     & 1301  & 875   & 1528  & 1333  & 945   & 1437  & - \\
    name-combine-2 & 4     & 2100  & 1721  & 2162  & 740   & -     & -     & - \\
    name-combine-2\_short & 4     & 2120  & 1697  & 2188  & 735   & -     & -     & - \\
    name-combine-3 & 6     & 298   & 287   & 294   & 301   & 289   & 299   & 547447 \\
    name-combine-3\_short & 6     & 313   & 294   & 291   & 296   & 278   & 290   & 544044 \\
    name-combine-4 & 5     & 1863  & 1683  & 1917  & 1875  & 1584  & 1921  & - \\
    name-combine-4\_short & 5     & 1888  & 1664  & 1914  & 1815  & 1604  & 1875  & - \\
    phone & 6     & 66    & 68    & 67    & 66    & 68    & 64    & 943 \\
    phone\_short & 6     & 65    & 68    & 66    & 67    & 66    & 68    & 963 \\
    phone-1 & 6     & 62    & 62    & 61    & 59    & 63    & 62    & 933 \\
    phone-1\_short & 6     & 61    & 64    & 60    & 62    & 63    & 60    & 942 \\
    phone-2 & 6     & 83    & 73    & 79    & 80    & 79    & 78    & 953 \\
    phone-2\_short & 6     & 117   & 76    & 80    & 80    & 76    & 79    & 943 \\
    phone-3 & 7     & 926   & 559   & 692   & 839   & 611   & 648   & - \\
    phone-3\_short & 7     & 789   & 530   & 652   & 786   & 607   & 600   & - \\
    phone-4 & 6     & 2571  & 2067  & 3054  & 2678  & 2202  & 2963  & - \\
    phone-4\_short & 6     & 2637  & 2057  & 2872  & 2608  & 2256  & 3014  & - \\
    phone-5 & 7     & 114   & 101   & 110   & 112   & 98    & 108   & 122 \\
    phone-5\_short & 7     & 109   & 99    & 105   & 116   & 98    & 110   & 127 \\
    phone-6 & 7     & 171   & 138   & 168   & 175   & 142   & 166   & 3230 \\
    phone-6\_short & 7     & 170   & 140   & 169   & 173   & 135   & 163   & 3327 \\
    phone-7 & 7     & 165   & 132   & 153   & 159   & 134   & 155   & 2793 \\
    phone-7\_short & 7     & 165   & 133   & 162   & 157   & 136   & 154   & 2762 \\
    phone-8 & 7     & 158   & 144   & 157   & 162   & 145   & 156   & 3464 \\
    phone-8\_short & 7     & 157   & 142   & 154   & 162   & 141   & 155   & 3223 \\
    phone-9 & 7     & 28815 & 29672 & 28029 & 28576 & 26465 & 30941 & - \\
    phone-9\_short & 7     & 28658 & 29055 & 28729 & 29115 & 27818 & 28157 & - \\
    phone-10 & 7     & 87772 & 73379 & 66170 & 76963 & 87778 & 69130 & - \\
    phone-10\_short & 7     & 85861 & 77495 & 75849 & 86508 & 81526 & 65247 & - \\
    reverse-name & 6     & 699   & 645   & 697   & 703   & 666   & 671   & - \\
    reverse-name\_short & 6     & 709   & 662   & 811   & 712   & 625   & 685   & - \\
    univ\_3 & 6     & 6258  & 4117  & 5994  & 6260  & 3499  & 6068  & - \\
    univ\_3\_short & 6     & 6345  & 4364  & 6503  & 6331  & 3510  & 5962  & - \\
        \hline
    \end{tabular}%
    }
\caption{Runtime performance of \sys~and CFTA on benchmarks with deletion based
noise.}
    \label{table:deletions}%
\end{figure}%

Table~\ref{table:deletions} presents results for all SyGus 2018 benchmark
problems which contain less than ten input/output examples.
We omit univ\_1, univ\_2, and univ\_4-6 --- these 
problems time out for both \sys{}~ and CFTA (so the rows would contain
all -). The first column (Benchmark) presents the name of the SyGus 2018
benchmark. The second column (Number of Input/Output Examples) presents the
number of input/output examples in the benchmark problem. The remaining
columns present running times, in milliseconds, for \sys{}~ and CFTA running
with different noise sources and loss functions. 
A - indicates that the corresponding run timed
out without synthesizing a program. 

The objective function is the lexicographic
objective function. The complexity measure is program size. The noise source
cyclically deletes a single character from outputs in the data set,
starting with the first character, then wrapping around when reaching the
last position in the output. When $n=1$, the noise source corrupts the last
output in the set of input/output examples. When $n=3$, the noise source
corrupts the last $3$ input/output examples in the set of input/output examples.
For \sys{}~, the table presents results for each of the $L_{0/1}, L_{DL},$ and $L_{1D}$
loss functions. For CFTA, we report one running time for each benchmark problem --- 
for CFTA, the running time is the same for all noise source/loss function combinations. 

For the benchmarks on which both terminate, \sys{}~ runs up to 1957 times
 faster than CFTA, with a median speedup of 20.5 times over CFTA. 
This performance increase enables \sys{}~ to successfully synthesize programs for 
20 more benchmark problems than CFTA --- \sys{}~ synthesizes programs for 
44 of the 54 benchmark problems (timing out on the remaining 10), while
CFTA can only synthesize programs for 24 of the 54 benchmark problems
(timing out on the remaining 30). These results highlight
the substantial performance benefits that \sys{}~ delivers. 

Every synthesized program is guaranteed to minimize the objective function over the given
input/output examples. For $n=1$, all synthesized programs also have zero loss over
the original (unseen during synthesis) noise-free input/output examples (i.e., all synthesized 
programs generate the correct output for each given input). For $n=3$, 
34, 40, and 40 out of 44
synthesized programs, for $L_{0/1}$, $L_{DL}$, and $L_{1D}$ respectively, 
have zero loss over the original noise-free input/output
examples. For a given noise source/loss function combination, CFTA and \sys{}~
synthesize the same program (unless one or both of the systems times out).  
These results highlight the ability of \sys{}~ to synthesize correct programs 
even in the face of significant noise. 

\begin{figure}
    \centering
    \begin{tabular}{|l|c|c|c|c|}
\hline
    \multirow{2}[0]{*}{Benchmark} & \multirow{2}[0]{*}{No of Examples} &
\multicolumn{2}{c|}{Rose} & \multirow{2}[0]{*}{CFTA} \\
\cline{3-4}
          &       & $L_{nS}$    & $L_{DL}$    &  \\
\hline    
    phone-long-repeat & 400   & 745   & 749   & 28836 \\
    phone-1-long-repeat & 400   & 728   & 749   & 28547 \\
    phone-2-long-repeat & 400   & 806   & 784   & 29766 \\
    phone-3-long-repeat & 400   & 1919  & 2297  & - \\
    phone-4-long-repeat & 400   & 3332  & 4683  & - \\
    phone-5-long-repeat & 400   & 1145  & 1156  & 2179 \\
    phone-6-long-repeat & 400   & 1227  & 1315  & 100241 \\
    phone-7-long-repeat & 400   & 1253  & 1341  & 89535 \\
    phone-8-long-repeat & 400   & 1207  & 1290  & 88932 \\
    phone-9-long-repeat & 400   & 18795 & 39032 & - \\
    phone-10-long-repeat & 400   & 41485 & 113397 & - \\
\hline    
\end{tabular}%
  \caption{\sys~and CFTA's performance on dataset corrupted by substitution
based noise.}
\label{table:substitutions}%
\end{figure}%

Table~\ref{table:substitutions} presents results for the SyGus 2018 phone-*-long-repeat 
benchmarks running with a noise source that cyclically and
probabilistically replaces a single digit in each output string with the next
digit (wrapping back to $0$ if the current digit is $9$).
The noise source iterates through each output string in the data set in turn,
probabilistically replacing the next character position in each output string 
with another character, wrapping around to the first character position when
it reaches the last character position in the output string. 
The noise source corrupts $95\%$ of the input-output examples in each dataset.

We report results for two loss functions, $L_{nS}$ ($n$-substitution) and 
$L_{DL}$ (Damerau-Levenshtien). The objective function is the lexicographic
objective function. The complexity measure is program size. There
is a row in the table for each phone-*-long-repeat benchmark; each entry 
presents the running time (in milliseconds) for the corresponding synthesis
algorithm running on the corresponding benchmark problem.  

For the benchmarks on which both terminate, \sys{}~ runs up to 81.7 
times faster than CFTA, with a median speedup of 39.0 times over CFTA. 
This performance increase enables \sys{}~ to successfully synthesize programs for 
4 more benchmark problems than CFTA --- \sys{}~ synthesizes programs for 
all 11 of the benchmark problems, 
while CFTA synthesizes programs for 7 of the 11 benchmark problems
(timing out on the remaining 4). Once again, these results highlight
the substantial performance benefits that \sys{}~ delivers. 

Every synthesized program is guaranteed to minimize the objective function over the given
input/output examples. All synthesized programs have zero loss over the 
original (unseen during synthesis) noise-free input/output examples (i.e., all synthesized 
programs generate the correct output for each given input). 
Once again, these results highlight the ability of \sys{}~ to synthesize correct programs 
even in the face of significant noise. 

\noindent{\bf Noise-Free Data Sets}
We also evaluated the performance of \sys~and CFTA by applying it to all
problems in the SyGuS 2018 benchmark
suite~\cite{alur2013syntax}. For each problem we synthesize the optimal program
over clean (noise-free) datasets.
We present the result of all SyGuS benchmark suite in the
appendix~\ref{sec:appendixresults}. 

\eat{
For \sys~, the table presents results for
$5$ loss functions, zero-one loss function $L_{0/1}$, zero-infinity loss
function $L_{0/\infty}$, Damerau-Levenshtein loss function $L_{DL}$, $1$-Delete 
loss function $L_{1D}$, and the $n$-Substitution loss function $L_{nS}$.
}

For the benchmarks on which both terminate, \sys~runs up to 1831 times faster
than CFTA, with the median speedup of 35.7 over CFTA. This enables \sys~to
successfully synthesize programs for 45 more benchmark problems that CFTA -- 
\sys~synthesizes programs for 90 of the 108 benchmark problems (timing out on
the remaining 18 problems), while CFTA can
only synthesize the correct program for 45 of the 108 benchmark problems (timing
out of the remaining problems).

\vspace{-0.05in}
\section{Related Work}
\vspace{-0.05in}
We discuss related work in the following areas:

\noindent{\bf Programming-By-Example/Noise-Free Synthesis:}
Synthesizing programs from a set of input-output examples has been a prominent
topic of research for many 
years~\cite{shaw1975inferring,gulwani2011automating, singh2016transforming}. 
These techniques either require the entire dataset to be noise free, or they try
to remove corrupted input-output examples from the dataset before synthesizing
the correct program. Instead of removing corrupted examples from the dataset,
our technique uses the loss function to capture information from 
any corrupted examples and uses this information during the synthesis. 
The experimental results show that, for our set of benchmarks, our
approach can synthesize the correct program even in the presence
of substantial noise.

\noindent {\bf Neural Program Synthesis/Machine-Learning Approaches:}
Researchers have investigated techniques that use machine
learning/deep neural networks to synthesize programs~\cite{raychev2016learning, 
devlin2017robustfill, balog2016deepcoder}. 
The techniques primarily focus on synthesizing programs over noise-free datasets.
These techniques require a training phase and a differentable loss function and
provide no guarantees that the synthesized program will minimize the 
objective function.  Our technique, in contrast, does not require a 
training phase, can work with arbitrary loss functions including, for example, the 
Damerau-Levenshtien loss function, and comes with a guarantee that the
synthesized program will minimize the objective function over the provided
(noisy) input/output examples. 

\noindent{\bf Tree Automata/VSA based synthesis algorithms:}
\cite{singh2016transforming,polozov2015flashmeta, wang2017synthesis} 
all construct either version spaces or tree 
automata to represent all programs (within a bounded search space) which 
satisfy a set of input-output examples.
These techniques build the version space in one shot before synthesis and require the dataset to
be either be correct or pruned to remove any corrupt input/output examples. 

Our technique also works with version spaces (as represented in tree automata),
but extends the approach to work with noisy data sets. It also introduces an abstraction based
optimization process which can iteratively expand and improve the
version space during synthesis (instead of building it in one shot as in previous research). 

\noindent{\bf Abstraction-Refinement based Synthesis Algorithms:}
There has been work  done on using abstraction refinement/refinement types to
synthesize programs~\cite{guo2019program, wang2017synthesis, polikarpova2016program}. 
Given a noise free dataset and a program, checking if a program is correct or
incorrect simply checks if the synthesized programs satisfy all input/output examples. 
To refine an abstraction, these techniques construct a proof of
incorrectness. Each abstraction identifies a set of programs, some of which may
be correct and others of which may be incorrect. Refinement first identifies
a program that does not satisfy one or more of the input/output examples, then
generates constraints that refine the abstraction to eliminate this program. 
Iterative refinement eventually produces the final program. 

Abstraction in our noisy program synthesis framework, in contrast, works with an
abstraction that approximates the loss function over a set of programs. The refinement
step selects a program within the abstraction space, computes its loss, then uses this
computed loss to refine the loss approximation to bring this approximation closer
to the actual loss. This refinement step, in expectation, reduces the inaccuracy
in the approximated loss function of the programs identified by the abstraction. 
In contrast to previous approaches, which work with abstractions based on program
correctness and refinement steps that eliminate incorrect programs, our approach
works with abstractions that maintain a sound, conservative approximation of the minimum loss
function over the set of programs identified by the abstraction and refinement
steps that eliminate programs based on the loss of the programs. 

One key difference is that refinement steps in previous techniques rely on the 
ability to identify correct and incorrect programs. Because our
technique works with noisy data sets, it can never tell if a candidate program has 
minimal loss without comparing the program to all other current candidate
programs (unless the loss happens to be zero). It instead uses abstract
minimum loss values to bound how far off the optimal loss any candidate program
may be. Instead of working with correct or incorrect programs, our technique
works by iteratively improving the accuracy of the minimum 
loss function estimation captured by the abstraction. 

Our technique therefore combines abstract tree automata with an abstraction-based optimization process. 
Our approach, in contrast to previous approaches that use abstract tree automata, 
enables us to synthesize programs that optimize an objective function over 
a set of noisy input/output examples, including synthesizing correct programs that may
disagree with one, some, or even all of the provided input/output examples. 

\cite{wang2017program} uses abstract tree automata
and abstraction refinement for program synthesis. 
Because their refinement strategy prunes any program that
does not satisfy all of the provided input/output examples, 
their algorithm requires the dataset to be noise free. This pruning is necessary as this 
allows their technique to effectively capture constraints to prune large part of the search space. 

\noindent{\bf Noisy Program Synthesis:}
Handa and Rinard formalize a noisy program synthesis framework and 
present a technique that uses finite tree automata to 
synthesize the program which {\it best fits} the noisy dataset based on an 
objective function, a loss function, and a complexity measure~\cite{handa2020inductive}. 
We present a simplified version of the presented technique in Section~\ref{sec:preliminaries}. 
This paper presents a new abstraction/refinement 
technique to solve the noisy program synthesis problem.
Our experimental results show that this new technique significantly outperforms
the technique presented in~\cite{handa2020inductive}.

Handa and Rinard also formalize a connection,
in the context of noisy program synthesis, between the characteristics
of the noise source and the hidden program that together generate the noisy data set and
the characteristics of the loss function~\cite{handa2021program}. Specifically, 
the paper identifies, given a noise model, a corresponding optimal loss
function as well as properties of combinations of noise models and loss functions 
that ensure that the presented noisy synthesis algorithm will converge to the correct 
program given enough noisy input/output pairs. This work is complementary to the
work presented in this paper.  All of the guarantees established 
in ~\cite{handa2021program} also apply to the algorithm presented in this paper. 

\noindent{\bf Best Effort Synthesis:}
\cite{peleg2020perfect} presents an enumeration-based technique to
synthesize programs from input-output datasets containing some incorrect outputs.
Their technique returns a ranked list of partially valid programs, removing programs
which are observationally equivalent. Their technique
uses a fixed fitness function to order these partial results. 
\cite{peleg2020perfect} uses a  specific loss function and a specific 
complexity measure to rank candidate programs. Given this loss function and
complexity measure, our technique will synthesize the exact same program.
Our technique also supports the 
use of a large class of loss functions, complexity measures, and objective functions. 
\cite{handa2020inductive} has showcased how crafting suitable loss functions
is essential and allows one to synthesize the 
correct program even when some or even all input/output examples are corrupted.

\section{Conclusion}
We present a new technique to synthesize programs over noisy datasets. 
This technique uses an abstraction refinement based optimization process to search
for a program which {\it best-fits} a given dataset, based on an objective
function, a loss function, and a complexity measure. The algorithm deploys
an abstract semantics to soundly approximate the minimum loss function over
abstracted sets of programs in an underlying domain-specific language. 
Iterative refinement based on this sound approximation produces a 
program whose loss value is within a specified tolerance level of the 
program with optimal loss over the given noisy input/output data set. 
We provide a proof that the technique is sound and complete and will
always synthesize an $\epsilon$-correct program. 

We have implemented our synthesis algorithm in the \sys{} noisy program
synthesis system. Our experimental results show that,
on two noisy benchmark program synthesis problem sets drawn from the
SyGus 2018 benchmarks, \sys{} delivers speedups of up to 1587 and
81.7 over a previous state-of-the art noisy synthesis system, with 
median speedups of 20.5 and 81.7 over this previous system. \sys{}~ also
terminates on 20 (out of 54) and 4 (out of 11) more benchmark problems
than the previous system.  Both \sys{}~ and the previous system synthesize
programs that are optimal over the provided noisy data sets.  
For the majority of the problems in the benchmark
sets ($272$ out of $286$ for \sys), both systems also synthesize programs
that produce correct outputs for all inputs in the original (unseen) noise-free
data set. These results highlight the significant benefits that \sys{}~ can 
deliver for effective noisy program synthesis.

\eat{
\begin{acks}                            
  This material is based upon work supported by the
  \grantsponsor{GS100000001}{National Science
    Foundation}{http://dx.doi.org/10.13039/100000001} under Grant
  No.~\grantnum{GS100000001}{nnnnnnn} and Grant
  No.~\grantnum{GS100000001}{mmmmmmm}.  Any opinions, findings, and
  conclusions or recommendations expressed in this material are those
  of the author and do not necessarily reflect the views of the
  National Science Foundation.
\end{acks}
}

\bibliography{citation}

\newpage
\appendix
\section{Appendix}
\subsection{Appendix: Additional Theorems}
\label{sec:appendixproofs}

\begin{theorem}\label{thm:alphap}
    Given $\cP$ and $\cP^*$, such that, $\{\mathsf{true}, \mathsf{false}\} \subseteq \cP 
    \subseteq \cP^* \subseteq \cU$ is true, then for any
    abstract value $\cvp$, the following statement is true:
    \[
        \alpha^{\cP^*}(\cvp) \implies \alpha^{\cP}(\cvp)
    \]
\end{theorem}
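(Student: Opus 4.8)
The plan is to unfold the definition of the best-abstraction operator $\alpha^{(\cdot)}(\cdot)$ and then finish with a one-line monotonicity-of-conjunction argument. Recall from the notation paragraph that $\alpha^\cP(\cvp)$ is the strongest conjunction of predicates drawn from $\cP$ that is still implied by $\cvp$. Concretely, I would first argue that this coincides with the conjunction of exactly those predicates in $\cP$ that $\cvp$ entails, i.e.
\[
    \alpha^\cP(\cvp) = \bigwedge \{q \in \cP \mid \cvp \implies q\}.
\]
Including any further predicate $q'$ with $\cvp \not\Rightarrow q'$ would violate the defining constraint $\cvp \implies \alpha^\cP(\cvp)$, while dropping any conjunct would yield a strictly weaker formula; hence this conjunction is the strongest admissible one. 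I would state this explicit characterization for both $\cP$ and $\cP^*$.

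Next I would exploit the hypothesis $\cP \subseteq \cP^*$. Writing $S_\cP = \{q \in \cP \mid \cvp \implies q\}$ and $S_{\cP^*} = \{q \in \cP^* \mid \cvp \implies q\}$, the containment of predicate sets immediately gives $S_\cP \subseteq S_{\cP^*}$: any predicate of $\cP$ entailed by $\cvp$ is also a predicate of $\cP^*$ entailed by $\cvp$. The final step is monotonicity of conjunction: every conjunct appearing in $\bigwedge S_\cP = \alpha^\cP(\cvp)$ also appears among the conjuncts of $\bigwedge S_{\cP^*} = \alpha^{\cP^*}(\cvp)$, so the larger conjunction entails the smaller one, yielding $\alpha^{\cP^*}(\cvp) \implies \alpha^{\cP}(\cvp)$, as required.

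Regarding the main obstacle: there is essentially none beyond care in unfolding the definition. The only point that deserves explicit justification is the claim that the informal phrase \emph{strongest conjunction} coincides with the conjunction over all entailed predicates; once that characterization is fixed, the result is a single monotonicity step. The assumption $\{\mathsf{true}, \mathsf{false}\} \subseteq \cP$ plays no part in the entailment itself — it only guarantees that both abstractions are well-defined (in particular $\alpha^\cP(\cvp) = \mathsf{true}$ when $\cvp$ entails no nontrivial predicate). I would therefore devote most of the write-up to pinning down the set-theoretic characterization of $\alpha^{(\cdot)}$ and keep the entailment step terse.
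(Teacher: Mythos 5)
Your proof is correct. The only substantive difference from the paper is presentational: the paper argues by contradiction, supposing $\alpha^{\cP^*}(\cvp) = \cvp_1 \centernot\implies \cvp_2 = \alpha^{\cP}(\cvp)$ and then observing that $\cvp_1 \wedge \cvp_2$ is expressible over $\cP^*$ (because $\cP \subseteq \cP^*$), is implied by $\cvp$, and is strictly stronger than $\cvp_1$, contradicting maximality. You instead prove the explicit characterization $\alpha^\cP(\cvp) = \bigwedge\{q \in \cP \mid \cvp \implies q\}$ and conclude directly by monotonicity of conjunction over the inclusion $S_\cP \subseteq S_{\cP^*}$. Both arguments turn on the same observation---every predicate of $\cP$ entailed by $\cvp$ is also a predicate of $\cP^*$ entailed by $\cvp$---but your closed-form characterization has the side benefit of making explicit that the ``strongest conjunction'' in the paper's informal definition exists and is unique up to logical equivalence, something the paper's contradiction argument leaves implicit.
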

\begin{proof}
    Proof by contradiction. Assuming $\alpha^{\cP^*}(\cvp) = \cvp_1$ 
    and $\alpha^{\cP}(\cvp) = \cvp_2$, such that, $\cvp_1 \centernot\implies \cvp_2$.
    Note that:
    \[
        \cvp \implies \cvp_1 \wedge \cvp_2 \text{, } \cvp_1 \wedge \cvp_2
        \implies \cvp_1 \text{, and }
        \cvp_1 \wedge \cvp_2
        \implies \cvp_2 
    \]
    $\cvp_1 \wedge \cvp_2$ can be expressed by using predicates in $\cP^*$,
    and $\cvp_1 \wedge \cvp_2$ is stronger than $\cvp_1$. 
    Hence $\cvp_1 \neq \alpha^{\cP^*}(\cvp)$. 
    Therefore, by contradiction, the above theorem is true.
\end{proof}

\begin{theorem}\label{thm:execp}
Given a set of predicates $\cP \subseteq \cU$ and $\cP^* \subseteq \cU$, such
    that $\{\mathsf{true}\} \subseteq \cP \subseteq \cP^*$, then for any 
    expression $e$ (starting from symbol $s$) and any input value
    $x_i$, the following statement is true:
    \[
        (s = \exec{e} x_i) \implies \exec{e}^{\cP^*} x_i 
        \text{ and } \exec{e}^{\cP^*} x_i \implies \exec{e}^{\cP} x_i
    \]
    i.e., adding more predicates will make the abstract computation more precise.
\end{theorem}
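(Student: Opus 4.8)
The plan is to prove both implications simultaneously by structural induction on the expression $e$ (equivalently, induction on its height), since the abstract execution rules of Figure~\ref{fig:execabstract} are defined compositionally. I would carry the two conjuncts through the induction together, because the inductive step for the precision conjunct ($\exec{e}^{\cP^*}x_i \implies \exec{e}^{\cP}x_i$) consumes the precision hypothesis on subexpressions, while the soundness conjunct ($(s = \exec{e}x_i) \implies \exec{e}^{\cP^*}x_i$) consumes its own hypothesis, and both split on the same case analysis over the form of $e$.

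For the base cases, where $e$ is the variable $x$ or a constant $t$, the abstract value is $\alpha^{\cP'}(x = x_i)$ (respectively $\alpha^{\cP'}(t = \exec{t}x_i)$) for $\cP' \in \{\cP, \cP^*\}$. For the soundness conjunct the defining property of $\alpha$ gives $(s = \exec{e}x_i) \implies \alpha^{\cP^*}(s = \exec{e}x_i) = \exec{e}^{\cP^*}x_i$ immediately. For the precision conjunct, Theorem~\ref{thm:alphap} applied to the single abstract value $(s = \exec{e}x_i)$ yields $\alpha^{\cP^*}(\cdot) \implies \alpha^{\cP}(\cdot)$, which is exactly $\exec{e}^{\cP^*}x_i \implies \exec{e}^{\cP}x_i$.

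For the inductive step, let $e = f(e_1, \ldots, e_k)$ and write $\cvp_i = \exec{e_i}^{\cP}x_i$ and $\cvp_i^* = \exec{e_i}^{\cP^*}x_i$. For soundness, the induction hypothesis gives $\exec{e_i}x_i \in \gamma(\cvp_i^*)$; soundness of the abstract semantics then places $\exec{e}x_i = \exec{f(\exec{e_1}x_i, \ldots, \exec{e_k}x_i)}$ inside $\gamma(\exec{f(\cvp_1^*, \ldots, \cvp_k^*)}^\#)$, and since $\alpha^{\cP^*}$ only weakens an abstract value, $\exec{e}x_i \in \gamma(\exec{e}^{\cP^*}x_i)$, i.e. $(s = \exec{e}x_i) \implies \exec{e}^{\cP^*}x_i$. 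For precision, the induction hypothesis gives $\cvp_i^* \implies \cvp_i$ for each $i$, and I would assemble the chain $\exec{e}^{\cP^*}x_i = \alpha^{\cP^*}(\exec{f(\cvp_1^*, \ldots, \cvp_k^*)}^\#) \implies \alpha^{\cP}(\exec{f(\cvp_1^*, \ldots, \cvp_k^*)}^\#) \implies \alpha^{\cP}(\exec{f(\cvp_1, \ldots, \cvp_k)}^\#) = \exec{e}^{\cP}x_i$, where the first step is Theorem~\ref{thm:alphap}, and the second combines monotonicity of the abstract transformer (which turns $\cvp_i^* \implies \cvp_i$ into $\exec{f(\cvp_1^*, \ldots, \cvp_k^*)}^\# \implies \exec{f(\cvp_1, \ldots, \cvp_k)}^\#$) with monotonicity of $\alpha^{\cP}$.

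The main obstacle is the transformer-monotonicity step: deriving $\exec{f(\cvp_1^*, \ldots, \cvp_k^*)}^\# \implies \exec{f(\cvp_1, \ldots, \cvp_k)}^\#$ from $\cvp_i^* \implies \cvp_i$ is precisely monotonicity of $\exec{\cdot}^\#$ with respect to the implication (concretization-inclusion) order, and this is \emph{not} among the properties listed in Section~\ref{subsec:abstractions}, where only soundness and singleton-precision are assumed. I would discharge it either by adopting monotonicity as an explicit requirement on the supplied abstract semantics, or by observing that the generic meet-based transformer of Section~\ref{sec:implementation}, namely $f(\bigwedge p, \ldots) := \bigsqcap \cdots f(p, \ldots)$, is monotone by construction, so every instantiation used in practice satisfies it. The accompanying monotonicity of $\alpha$ itself (a stronger concrete formula has a stronger best over-approximation within a fixed predicate set) is routine and I would record it as a one-line lemma alongside Theorem~\ref{thm:alphap}.
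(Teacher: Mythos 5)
Your proof takes essentially the same route as the paper's: induction on the height of $e$, with the base case handled by the definition of $\alpha$ together with Theorem~\ref{thm:alphap}, and the inductive step splitting the soundness conjunct (discharged via soundness of the abstract semantics) from the precision conjunct (discharged via Theorem~\ref{thm:alphap} plus monotonicity of the abstract transformer). The one place you go beyond the paper is in flagging transformer monotonicity as an unstated assumption: the paper's own proof silently asserts $\gamma(\cvp^*) \subseteq \gamma(\cvp)$ at exactly this point without justification, so your observation that monotonicity must either be adopted as an explicit requirement on the supplied abstract semantics or derived from the meet-based generic transformer of Section~\ref{sec:implementation} is a correct and genuine tightening of the argument.
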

\begin{proof}
    We prove this using induction over height of expression $e$.

    \noindent{\it Base Case:} The height of $e$ is $1$, i.e., $e$ is a terminal
    $t$.
    Since $\mathsf{true} \in \cP$, using definition of $\alpha^\cP$:
    \[
        (s = \exec{t} x_i) \implies \alpha^{\cP^*}(s = \exec{t} x_i)
    \]
    From Theorem~\ref{thm:alphap}:
    \[
 \alpha^{\cP^*}(s = \exec{t} x_i) \implies \alpha^{\cP}(s = \exec{t} x_i)
    \]

    \noindent{\it Induction Hypothesis:} For all expressions $e$ of height less than
    equal to $n$, the following statement is true:
        \[(s = \exec{e} x_i) \implies \exec{e}^{\cP^*} x_i \text{ and } 
        \exec{e}^{\cP^*} x_i  \implies \exec{e}^{\cP}
        x_i\]

    \noindent{\it Induction Step:} Consider an expression $e$ of height $n+1$.
    Without loss of generality, we can assume $e$ is of the form $f(e_1, \ldots
    e_k)$, where height of sub-expressions $e_1, \ldots e_k$ is less than equal
    to $n$.
    Therefore, for all $j \in [1, n]$
    \[
        (s = \exec{e_j} x_i) \implies \exec{e_j}^{\cP^*} x_i
    \text{ and }
\exec{e_j}^{\cP^*} x_i 
        \implies \exec{e_j}^{\cP} x_i
    \]
    and for all $j \in [1, n]$
    \[
        \exec{e_j}x_i \in \gamma(\exec{e_j}^{\cP^*} x_i) \subseteq \gamma(\exec{e_j}^{\cP^*} x_i)   
    \]
    Let $\alpha^\cP(\exec{f(\exec{e_1}^{\cP^*} x_i,\ldots \exec{e_k}^{\cP^*} x_i)}^\#) = \cvp^*$
    and $\alpha^\cP(\exec{f(\exec{e_1}^{\cP} x_i, \ldots \exec{e_k}^{\cP}
    x_i)}^\#) = \cvp$. Note that:
    \[
        \exec{f(\exec{e_1}x_i, \ldots \exec{e_k}x_i)} \in \gamma(\cvp^*) \subseteq \gamma(\cvp)
    \]
    Therefore,
    \[
        (s = \exec{e}x_i) \implies \cvp^* \text{ and } \cvp^* \implies \cvp
    \]
    \[
         (s = \exec{e}x_i) \implies \alpha^{\cP^*}(\cvp^*) 
    \text{ and }
         \alpha^{\cP^*}(\cvp^*) \implies \alpha^{\cP^*}(\cvp) 
    \text{ and }
          \alpha^{\cP^*}(\cvp) \implies \alpha^\cP(\cvp) 
    \]
    Hence, by induction, the above theorem is true.
\end{proof}

\begin{theorem}\label{thm:appendixstructure}{\bf (Structure of the Tree Automaton)}
    Given a set of predicates $\cP$, input vector $\vec{x} = \tup{x_1, \ldots
    x_n}$, and DSL $G$,
    let $\cA = (Q, Q_f, \Delta)$ be the AFTA returned by the
    function $\mathsf{ConstructAFTA}(\vec{x}, G, \cP)$.
    Then for all symbols $s$ in $G$,
    for all expressions $e$ starting from symbol $s$
    (and height less than bound $b$),
    there exists a state $q^{\cVvp}_s \in Q$,
    such that, $e$ is accepted by the automaton $(Q, \{q^\cVvp_s\}, \Delta)$,
    where $\cVvp = \tup{\exec{e}^\cP x_1, \ldots \exec{e}^\cP x_n }$.
\end{theorem}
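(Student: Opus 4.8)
The plan is to prove the statement by structural induction on the height $h$ of the expression $e$, restricted to $h < b$. For a fixed $e$ rooted at symbol $s$ I would establish two coupled facts at once: that the state $q^{\cVvp}_s$ with $\cVvp = \tup{\exec{e}^\cP x_1, \ldots \exec{e}^\cP x_n}$ is actually added to $Q$ during the run of $\mathsf{ConstructAFTA}$, and that there is an accepting run of the tree automaton on $e$ rewriting it to this very state. Because the abstract value carried by the state is pinned to the componentwise abstract execution $\exec{e}^\cP \vec{x}$, these two obligations cannot be separated, so I would carry them together through the induction.

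For the base case $h = 1$, the expression $e$ is a single leaf, either the input variable $x$ or a constant $t \in T_C$. Here I would appeal directly to the Var and Const rules of Figure~\ref{fig:rulesafta}, which create exactly the state $q^{\cVvp}_x$ (respectively $q^{\cVvp}_t$) whose $j$-th component is $\alpha^\cP(x = x_j)$ (respectively $\alpha^\cP(t = \exec{t}x_j)$). These match, component by component, the Variable and Constant rules of the abstract execution semantics in Figure~\ref{fig:execabstract}, so $\cVvp = \tup{\exec{e}^\cP x_1, \ldots \exec{e}^\cP x_n}$ as required, and the single-node tree is rewritten to that state immediately.

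For the inductive step I would write $e = f(e_1, \ldots, e_k)$, arising from a production $s \rightarrow f(s_1, \ldots, s_k)$, where each subexpression $e_i$ has height at most $h-1 < b$. Applying the induction hypothesis to each $e_i$ yields states $q^{\cVvp_i}_{s_i} \in Q$ such that $e_i$ is accepted by $(Q, \{q^{\cVvp_i}_{s_i}\}, \Delta)$, with $\cVvp_i = \tup{\exec{e_i}^\cP x_1, \ldots \exec{e_i}^\cP x_n}$. The Prod rule of Figure~\ref{fig:rulesafta} then fires on precisely these states, adding a state $q^{\cVvp}_s$ whose $j$-th component is $\alpha^\cP(\exec{f(\cvp_{1j}, \ldots, \cvp_{kj})}^\#)$ together with the transition $f(q^{\cVvp_1}_{s_1}, \ldots, q^{\cVvp_k}_{s_k}) \rightarrow q^{\cVvp}_s$ in $\Delta$. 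Composing the accepting runs for the $e_i$ with this transition produces an accepting run for $e$ ending at $q^{\cVvp}_s$.

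The crux of the argument, and the step I expect to require the most care, is verifying that the abstract value $\cVvp$ manufactured by the Prod rule is literally the componentwise abstract execution of $e$. This reduces to checking that the Function rule of Figure~\ref{fig:execabstract} computes $\exec{e}^\cP x_j = \alpha^\cP(\exec{f(\exec{e_1}^\cP x_j, \ldots, \exec{e_k}^\cP x_j)}^\#)$, which is exactly the $j$-th component produced by Prod once the induction hypothesis is substituted for each $\exec{e_i}^\cP x_j$. The only remaining bookkeeping concern is the height bound: since $e$ has height below $b$, every subexpression invoked in the induction also lies below the truncation threshold, so no state or transition used above is discarded by the bounded construction. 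Given these two observations, the induction closes and the theorem follows.
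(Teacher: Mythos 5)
Your proof is correct and follows essentially the same route as the paper's: induction on the height of $e$, with the base case handled by the Var and Const rules and the inductive step by applying the induction hypothesis to the subexpressions and invoking the Prod rule, then matching the Prod rule's componentwise value $\alpha^\cP(\exec{f(\cvp_{1j},\ldots,\cvp_{kj})}^\#)$ against the Function rule of the abstract execution semantics. Your added attention to the height bound and to why the state's abstract value coincides with $\exec{e}^\cP\vec{x}$ is sound bookkeeping that the paper treats more tersely.
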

\begin{proof}
    We prove this theorem by using induction over height of the expression $e$.

    \noindent{\it Base Case:} Height of expression $e$ is $1$. This implies the
    symbol is either $x$ or a constant. According to Var and Const rules
    (Figure~\ref{fig:rulesafta}),
    there exists state $q^{\cVvp}_t \in Q$ (for terminal $t$),
    where $\cVvp =
\tup{\exec{t}^\cP x_1, \ldots \exec{t}^\cP x_n }$ and $t$ is accepted by
    automaton $(Q, \{q^\cVvp_t\}, \Delta)$.

    \noindent{\it Inductive Hypothesis:} For all symbols $s$ in $G$, for all
    expressions $e$ starting from symbol $s$ of height less than equal to $n$,
    there exists a state $q^{\cVvp}_s \in Q$,
    such that, $e$ is accepted by the automaton $(Q, \{q^\cVvp_s\}, \Delta)$,
    where $\cVvp = \tup{\exec{e}^\cP x_1, \ldots \exec{e}^\cP x_n }$.

    \noindent{\it Induction Step:} For any symbol $s$ in $G$, consider an
    expression $e = f(e_1, \ldots e_k)$ of height equal to $n+1$,
    created from production $s \leftarrow f(s_1, \ldots s_k)$.
    Note the height of expressions $e_1, \ldots e_k$ is less than equal to $n$,
    therefore using induction hypothesis, there exists states
    $q^{\cVvp_1}_{s_1}, \ldots q^{\cVvp_k}_{s_k} \in Q$, such that
    $e_i$ is accepted by automaton $(Q, \{q^{\cVvp_i}_{s_i}\}, \Delta)$,
    where $\cVvp_i = \tup{\exec{e_i}^\cP x_1, \ldots \exec{e_i}^\cP x_n}$.
    Note based on abstract execution rules (Figure~\ref{fig:execabstract}):
    \[
        \exec{e}^\cP x_i = \alpha^\cP(\exec{f(\exec{e_1}^\cP x_i, \ldots
        \exec{e_k}^\cP x_i )}^\#)
    \]
    According to Prod rule (Figure~\ref{fig:rulesafta}),
    there exists a state $q^\cVvp_s \in
    Q$, where $\cVvp = \tup{\exec{e}^\cP x_1, \ldots \exec{e}^\cP x_n }$, and
    $e$ is accepted by $(Q, \{q^{\cVvp}_s\}, \Delta)$.

    Therefore, by induction, for all symbols $s$ in $G$,
    for all expressions $e$ starting from symbol $s$
    (and height less than bound $b$),
    there exists a state $q^{\cVvp}_s \in Q$,
    such that, $e$ is accepted by the automaton $(Q, \{q^\cVvp_s\}, \Delta)$,
    where $\cVvp = \tup{\exec{e}^\cP x_1, \ldots \exec{e}^\cP x_n }$.
\end{proof}

\begin{theorem}\label{thm:appendixargmin}
    Given predicates $\cP$, DSL $G$, noisy dataset $\cD = (\vec{x}, \vec{y})$, 
    objective function $U$, loss
    function $L$, complexity measure $C$, and $\cA =
    \mathsf{ConstructAFTA}(\vec{x}, G, \cP)$, if $p^* =
    \mathsf{MinCost}(\cA, \cD, U, L, C)$ then 
    \[
        p^* \in \mathsf{argmin}_{p \in G} U(L(\exec{p}^\cP\vec{x}, \vec{y}),
        C(p))
    \]
    i.e., $p^*$ minimizes the {\it abstract} objective function.
\end{theorem}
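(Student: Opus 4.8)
The plan is to reduce the statement to two facts established by the automaton construction and by the structure of the $\mathsf{MinCost}$ procedure: (i) every program in $G$ (of height below the bound $b$) is associated with an accepting state whose stored abstract value equals the program's abstract execution, and (ii) $\mathsf{MinCost}$ returns the least-complex witness of the accepting state that minimizes the abstract objective. Chaining these, the returned program will dominate every $p \in G$ under $\leq_U$.

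First I would invoke Theorem~\ref{thm:appendixstructure} (equivalently Corollary~\ref{cor:abstractvalue}): for each $p \in G$ there is an accepting state $q^{\cVvp}_{s_0} \in Q_f$ with $\cVvp = \tup{\exec{p}^\cP x_1, \ldots, \exec{p}^\cP x_n} = \exec{p}^\cP\vec{x}$, and $p$ is accepted by $(Q, \{q^{\cVvp}_{s_0}\}, \Delta)$. This is the crucial bridge: it lets us index all programs by the accepting states of $\cA$ and guarantees that the abstract value recorded at the state is exactly the abstract execution value that appears in the objective.

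Next I would use the definition of $P[\cdot]$ computed inside $\mathsf{MinCost}$ (Figure~\ref{alg:mincost}): $P[q^{\cVvp}_{s_0}] \in \mathsf{argmin}_{p \in G[\vec{x}\rightarrow\cVvp]} C(p)$, where $G[\vec{x}\rightarrow\cVvp] = \{p \mid p \in G,\ \exec{p}^\cP\vec{x} = \cVvp\}$. Hence for any $p$ with $\exec{p}^\cP\vec{x} = \cVvp$ we have $C(P[q^{\cVvp}_{s_0}]) \leq C(p)$, and by the monotonicity of $U$ in its complexity argument,
\[
\tup{L(\cVvp, \vec{y}), C(P[q^{\cVvp}_{s_0}])} \leq_U \tup{L(\exec{p}^\cP\vec{x}, \vec{y}), C(p)}.
\]
Finally, since the selection loop of $\mathsf{MinCost}$ picks $q^* = q^{\cVvp^*}_{s_0}$ minimizing the abstract objective over all accepting states, we have $\tup{L(\cVvp^*, \vec{y}), C(P[q^*])} \leq_U \tup{L(\cVvp, \vec{y}), C(P[q^{\cVvp}_{s_0}])}$; chaining these inequalities and recalling $p^* = P[q^*]$ yields $\tup{L(\exec{p^*}^\cP\vec{x}, \vec{y}), C(p^*)} \leq_U \tup{L(\exec{p}^\cP\vec{x}, \vec{y}), C(p)}$ for every $p \in G$, which is exactly the claim.

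I expect the main obstacle to be the careful justification of the second step: I must be sure that $\mathsf{LeastComplex}$ minimizes complexity over \emph{precisely} the set $G[\vec{x}\rightarrow\cVvp]$ of programs sharing the abstract value $\cVvp$, so that the inequality $C(P[q^{\cVvp}_{s_0}]) \leq C(p)$ is valid, and that the height bound $b$ is handled consistently between the structure theorem and this set. The rest is a routine chain of $\leq_U$ inequalities that rests only on the monotonicity of $U$.
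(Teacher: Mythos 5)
Your proposal is correct and follows essentially the same route as the paper's own proof: invoke Corollary~\ref{cor:abstractvalue} to associate each program with an accepting state carrying its abstract value, observe that $\mathsf{MinCost}$ selects the state minimizing the abstract objective, and chain the $\leq_U$ inequalities. If anything, you are more explicit than the paper about the intermediate step (that $C(P[q^{\cVvp}_{s_0}]) \leq C(p)$ for all $p$ sharing the abstract value $\cVvp$, combined with monotonicity of $U$), which the paper's proof asserts without elaboration.
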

\begin{proof}
    Let $\cA = (Q, Q_f, \Delta)$.
    From corollary~\ref{cor:abstractvalue}, for each program $p \in G$, there
    exists a state $q^\cVvp_{s_0} \in Q_f$, such that, $\exec{p}^\cP \vec{x} =
    \cVvp$.
Since the algorithm finds an accepting state $q^{\cVvp^*}_{s_0} \in Q_f$, 
such that,
for all accepting states $q^{\cVvp}_{s_0} \in Q_f$,
\[
   \tup{L(\cVvp^*, \vec{y}), C(P[q^{\cVvp^*}_{s_0}])}
\leq_U
   \tup{L(\cVvp, \vec{y}), C(P[q^{\cVvp}_{s_0}])}
\]
for all $p \in G$,
    \[
   \tup{L(\cVvp^*, \vec{y}), C(P[q^{\cVvp^*}_{s_0}])}
\leq_U
    \tup{L(\exec{p}^\cP\vec{x}, \vec{y}), C(p)}
\]
    Since $p^* = P[q^{\cVvp^*}_{s_0}]$,
     \[
        p^* \in \mathsf{argmin}_{p \in G} U(L(\exec{p}^\cP\vec{x}, \vec{y}),
        C(p))
    \]
\end{proof}

\begin{theorem}\label{thm:appendixbackpropogate}
Given expression $e = f(e_1, \ldots e_n)$, input $x$, abstract value $\psi_p$
(assuming $(s = \exec{e}x) \implies \psi_p$,
predicates $\cP$, and universe of predicates $\cU$,
if the procedure $\mathsf{BackPropogate}(e, x, \psi_p, \cP,
\cU)$ returns predicate set $\cP_r$ then:
\[
    \exec{e}^{\cP \cup \cP_r} x \implies \psi_p
\]
    \eat{
    Let $\cP_r = \mathsf{BackPropogate}(e = f(e_1, \ldots e_k), x, \psi_p, \cP,
    \cU)$. If $\psi_p \implies (s = \exec{e} x)$ then
    \[
        \exec{e}^{\cP \cup \cP_r} x \implies \psi_p
    \]
}
\end{theorem}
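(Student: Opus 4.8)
The plan is to proceed by structural induction on the height of $e = f(e_1, \ldots, e_n)$, following the recursive structure of $\mathsf{BackPropogate}$ in Figure~\ref{alg:backpropogate}. The engine of every case is the guard of the selection loop: the vector $\vec{\psi}^*$ that is finally chosen satisfies $\exec{f(\cvp_1 \wedge \psi_1^*, \ldots, \cvp_n \wedge \psi_n^*)}^\# \implies \psi_p$. This holds either because some update installed $\vec{\psi}^*$ under exactly that test, or, for the untouched initial value $\vec{\psi}^* = \vec{\phi}$, because each $\phi_i = (s_i = \exec{e_i}x)$ is a singleton and the precise-on-singletons property collapses $\exec{f(\phi_1, \ldots, \phi_n)}^\#$ to $(s = \exec{e}x)$, which implies $\psi_p$ by hypothesis. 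Thus, if I can force each refined subexpression value to satisfy $\exec{e_i}^{\cP \cup \cP_r} x \implies \cvp_i \wedge \psi_i^*$, the conclusion will follow by composing through $f$.

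For the base case ($e$ of height $2$), every $e_i$ is a terminal, and the loop records $\mathsf{ExtractPredicates}(\psi_i^*)$ into $\cP_r$; since a terminal's abstract value is one application of $\alpha$ to its exact concrete value, these recorded atoms are retained and $\exec{e_i}^{\cP \cup \cP_r} x \implies \cvp_i \wedge \psi_i^*$. For the inductive step each $e_i$ has height at most $n$ and $\mathsf{BackPropogate}$ descends with target $\cvp_i \wedge \psi_i^*$. Before invoking the induction hypothesis I would discharge its precondition $(s_i = \exec{e_i}x) \implies \cvp_i \wedge \psi_i^*$: this holds because $\psi_i^*$ is only ever weakened from the initial $\phi_i = (s_i = \exec{e_i}x)$ and because $(s_i = \exec{e_i}x) \implies \cvp_i$ by soundness (Theorem~\ref{thm:execp}). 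The hypothesis then gives $\exec{e_i}^{\cP \cup \cP_r^{(i)}} x \implies \cvp_i \wedge \psi_i^*$ for the sub-call's returned set $\cP_r^{(i)} \subseteq \cP_r$, and Theorem~\ref{thm:execp} (more predicates only sharpen the abstract value) lifts this to $\exec{e_i}^{\cP \cup \cP_r} x \implies \cvp_i \wedge \psi_i^*$.

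With $\exec{e_i}^{\cP \cup \cP_r} x \implies \cvp_i \wedge \psi_i^*$ established for every $i$, I would close both cases uniformly. By the Function rule of Figure~\ref{fig:execabstract}, $\exec{e}^{\cP \cup \cP_r} x = \alpha^{\cP \cup \cP_r}\big(\exec{f(\exec{e_1}^{\cP\cup\cP_r}x, \ldots, \exec{e_n}^{\cP\cup\cP_r}x)}^\#\big)$. Monotonicity of the abstract transformer pushes the sharper arguments through $f$, so the inner abstract value implies $\exec{f(\cvp_1 \wedge \psi_1^*, \ldots)}^\#$, which in turn implies $\psi_p$ by the guard. It then remains to see that $\alpha^{\cP \cup \cP_r}$ does not discard $\psi_p$.

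I expect this last point to be the main obstacle rather than the induction bookkeeping. The difficulty is that the transformer can synthesize output atoms (e.g.\ a length $\mathsf{len}(e) = i_1 + i_2$) that are literally none of the recorded $\mathsf{ExtractPredicates}(\psi_i^*)$ atoms, so $\alpha^{\cP \cup \cP_r}$ could weaken the result below $\psi_p$ unless the atoms of $\psi_p$ themselves already lie in $\cP \cup \cP_r$. I would therefore carry, as an explicit part of the inductive invariant, the requirement that the caller has recorded $\mathsf{ExtractPredicates}(\psi_p)$ before the descent --- which the enclosing loop of $\mathsf{OptimizeAndBackPropogate}$ and each recursive site indeed do before issuing their call with the matching target --- together with monotonicity of $\exec{f(\cdot)}^\#$, the same property silently underlying the proof of Theorem~\ref{thm:execp}. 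Making these two assumptions on the abstract domain explicit is what turns the plausible composition through $f$ into a rigorous argument.
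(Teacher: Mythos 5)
Your proof takes essentially the same route as the paper's: induction on the height of $e$, using the selection-loop guard to obtain $\exec{f(\cvp_1\wedge\psi^*_1,\ldots,\cvp_n\wedge\psi^*_n)}^\# \implies \psi_p$ and the recursive calls (together with the recorded $\mathsf{ExtractPredicates}$ atoms) to obtain $\exec{e_i}^{\cP\cup\cP_r}x \implies \cvp_i\wedge\psi^*_i$ for each argument, then composing through $f$. The two extra assumptions you flag --- monotonicity of the abstract transformers, and that the atoms of the target $\psi_p$ already lie in $\cP\cup\cP_r$ so that the final application of $\alpha^{\cP\cup\cP_r}$ cannot discard it --- are left implicit in the paper's proof, which asserts the final implication directly; making them explicit (along with the lift from the sub-call's $\cP_r^{(i)}$ to the full $\cP_r$ via Theorem~\ref{thm:execp}) strengthens rather than departs from the paper's argument.
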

\begin{proof}
    We prove this theorem using induction over height of expression $e$.

    \noindent{\it Base Case:} Height of $e$ is $2$. This means all sub-expressions
$e_1, \ldots e_k$ are terminals.
    Note that $\cP_r \subseteq \mathsf{ExtractPredicates}(\psi^*_i)$, for all $i
    \in [1, k]$.
    \[
        \exec{e_i}^{\cP \cup \cP_r} \implies \cvp_i \wedge \psi^*_i
    \]
    and
    \[
 \exec{f(\cvp_1 \wedge \psi^*_1, \ldots
    \cvp_k \wedge \psi^*_k)}^\# \implies \psi_p
    \]
    therefore
    \[
        \exec{e}^{\cP \cup \cP_r} x \implies \psi_p
    \]

    \noindent{\it Induction Hypothesis:} For all expressions $e$ of height less
    than equal to $n$, the following is true:
    \[
        \exec{e}^{\cP \cup \cP_r} x \implies \psi_p
    \]

    \noindent{\it Induction Step:} Let $e = f(e_1, \ldots e_k)$ be an expression
    of height equal to $n+1$.
    The height of expressions $e_1, \ldots e_k$ is less than equal to $n$.

    Note that $\cvp_i \wedge \psi^*_i \implies \exec{e_i}x$ (line-7 and line-9).
    And since $\mathsf{BackPropogate}(e_i,
    x, \cvp_i \wedge \psi^*_i, \cP, \cU) \subseteq \cP_r$, using induction
    hypothesis:
    \[
        \exec{e_i}^{\cP \cup \cP_r} \implies \cvp_i \wedge \psi^*_i
    \]
    and
    \[
 \exec{f(\cvp_1 \wedge \psi^*_1, \ldots
    \cvp_k \wedge \psi^*_k)}^\# \implies \psi_p
    \]
    therefore
    \[
        \exec{e}^{\cP \cup \cP_r} x \implies \psi_p
    \]
\end{proof}

\subsection{Appendix: Non Noisy Performance Comparison}
\label{sec:appendixresults}
\begin{figure}
    {\tiny
  \centering
    \begin{tabular}{|l|c|c|c|c|c|c|c|}
        \hline
    \multirow{2}[0]{*}{Benchmark} & \multirow{2}[0]{*}{No of Examples} &
        \multicolumn{5}{c|}{Rose}              & \multicolumn{1}{c|}{CFTA} \\
          \cline{3-8}
          &       & $L_{0/\infty}$ & $L_{0/1}$   & $L_{DL}$ & $L_{1D}$ & $L_{nS}$ & Threshold 4\\

    \hline
        bikes & 6     & 67    & 67    & 69    & 73    & 67    & 19554 \\
    bikes-long & 24    & 115   & 116   & 125   & 164   & 113   & 58187 \\
    bikes-long-repeat & 58    & 198   & 196   & 230   & 357   & 204   & 127214 \\
    bikes\_small & 6     & 67    & 67    & 73    & 69    & 67    & 21210 \\
    dr-name & 4     & 376   & 408   & 349   & 537   & 378   & - \\
    dr-name-long & 50    & 444   & 501   & 659   & 880   & 446   & - \\
    dr-name-long-repeat & 150   & 788   & 775   & 1123  & 1012  & 845   & - \\
    dr-name\_small & 4     & 381   & 411   & 363   & 531   & 406   & - \\
    firstname & 4     & 121   & 124   & 131   & 135   & 118   & 4258 \\
    firstname-long & 54    & 334   & 332   & 404   & 735   & 333   & 37946 \\
    firstname-long-repeat & 204   & 823   & 811   & 948   & 1080  & 819   & 148101 \\
    firstname\_small & 4     & 120   & 119   & 305   & 170   & 118   & 4220 \\
    initials & 4     & 364   & 340   & 410   & 516   & 199   & 36188 \\
    initials-long & 54    & 626   & 693   & 967   & 848   & 488   & 378070 \\
    initials-long-repeat & 204   & 1252  & 1199  & 1637  & 1627  & 1159  & - \\
    initials\_small & 4     & 347   & 344   & 736   & 422   & 201   & 30920 \\
    lastname & 4     & 117   & 123   & 132   & 1069  & 125   & 175762 \\
    lastname-long & 54    & 341   & 345   & 383   & 426   & 350   & 565654 \\
    lastname-long-repeat & 204   & 825   & 833   & 970   & 1120  & 784   & - \\
    lastname\_small & 4     & 119   & 120   & 139   & 127   & 119   & 178825 \\
    name-combine & 6     & 1336  & 1248  & 1369  & 2231  & 740   & - \\
    name-combine-2 & 4     & 2112  & 2140  & 2119  & 3756  & 1730  & - \\
    name-combine-2-long & 54    & 2514  & 2557  & 1752  & 4748  & 2251  & - \\
    name-combine-2-long-repeat & 204   & 3690  & 3615  & 2567  & 5456  & 2990  & - \\
    name-combine-2\_short & 4     & 2101  & 2101  & 1760  & 2731  & 1748  & - \\
    name-combine-3 & 6     & 303   & 309   & 462   & 1382  & 299   & 547447 \\
    name-combine-3-long & 50    & 475   & 528   & 800   & 665   & 480   & - \\
    name-combine-3-long-repeat & 200   & 982   & 1126  & 1163  & 1389  & 982   & - \\
    name-combine-3\_short & 6     & 298   & 330   & 345   & 375   & 312   & 544044 \\
    name-combine-4 & 5     & 1906  & 1888  & 2253  & 4029  & 1882  & - \\
    name-combine-4-long & 50    & 2286  & 2280  & 3235  & 2668  & 2253  & - \\
    name-combine-4-long-repeat & 200   & 3053  & 3239  & 3436  & 3856  & 2915  & - \\
    name-combine-4\_short & 5     & 1964  & 1868  & 1779  & 3281  & 2018  & - \\
    name-combine-long & 50    & 1633  & 1592  & 2332  & 2867  & 1119  & - \\
    name-combine-long-repeat & 204   & 3590  & 3562  & 4844  & 5079  & 2721  & - \\
    name-combine\_short & 6     & 1365  & 1264  & 1130  & 2542  & 736   & - \\
    phone & 6     & 70    & 67    & 76    & 110   & 68    & 943 \\
    phone-1 & 6     & 59    & 62    & 64    & 72    & 62    & 933 \\
    phone-1-long & 100   & 282   & 280   & 304   & 322   & 279   & 8173 \\
    phone-1-long-repeat & 400   & 727   & 730   & 753   & 811   & 760   & 28547 \\
    phone-10 & 7     & 91284 & 79267 & 141671 & 134173 & 31920 & - \\
    phone-10-long & 100   & 90668 & 72924 & 115134 & 124326 & 35296 & - \\
    phone-10-long-repeat & 400   & 97173 & 88929 & 154668 & 133886 & 38224 & - \\
    phone-10\_short & 7     & 78872 & 76274 & 117722 & 124704 & 28693 & - \\
    phone-1\_short & 6     & 60    & 61    & 68    & 81    & 62    & 942 \\
    phone-2 & 6     & 79    & 81    & 78    & 157   & 70    & 953 \\
    phone-2-long & 100   & 299   & 299   & 543   & 430   & 294   & 6849 \\
    phone-2-long-repeat & 400   & 765   & 794   & 873   & 994   & 773   & 29766 \\
    phone-2\_short & 6     & 79    & 81    & 79    & 89    & 71    & 943 \\
    phone-3 & 7     & 699   & 773   & 694   & 883   & 356   & - \\
    phone-3-long & 100   & 1229  & 1291  & 1639  & 1707  & 774   & - \\
    phone-3-long-repeat & 400   & 2288  & 2406  & 2668  & 2674  & 1954  & - \\
    phone-3\_short & 7     & 808   & 791   & 904   & 1149  & 363   & - \\
    phone-4 & 6     & 2779  & 2511  & 2251  & 4568  & 1089  & - \\
    phone-4-long & 100   & 3529  & 3097  & 4367  & 5388  & 2228  & - \\
    phone-4-long-repeat & 400   & 5514  & 4487  & 5921  & 7886  & 3340  & - \\
    phone-4\_short & 6     & 2578  & 2527  & 2466  & 5311  & 1108  & - \\
\hline    
\end{tabular}%
}
    \caption{Runtime performance of \sys~and CFTA over noise-free dataset}
\label{table:nonoise1}
\end{figure}
\begin{figure}
    {\tiny
  \centering 
\begin{tabular}{|l|c|c|c|c|c|c|c|}
        \hline
    \multirow{2}[0]{*}{Benchmark} & \multirow{2}[0]{*}{No of Examples} &
        \multicolumn{5}{c|}{Rose}              & \multicolumn{1}{c|}{CFTA} \\
          \cline{3-8}
          &       & $L_{0/\infty}$ & $L_{0/1}$   & $L_{DL}$ & $L_{1D}$ & $L_{nS}$ & Threshold 4\\

   \hline
     phone-5 & 7     & 110   & 116   & 100   & 267   & 98    & 122 \\
    phone-5-long & 100   & 407   & 407   & 485   & 466   & 386   & 683 \\
    phone-5-long-repeat & 400   & 1156  & 1153  & 1308  & 1344  & 1123  & 2179 \\
    phone-5\_short & 7     & 109   & 115   & 158   & 126   & 101   & 127 \\
    phone-6 & 7     & 168   & 170   & 148   & 194   & 119   & 3230 \\
    phone-6-long & 100   & 493   & 503   & 484   & 1047  & 447   & 27566 \\
    phone-6-long-repeat & 400   & 1246  & 1295  & 1353  & 1507  & 1234  & 100241 \\
    phone-6\_short & 7     & 169   & 178   & 183   & 231   & 121   & 3327 \\
    phone-7 & 7     & 152   & 163   & 370   & 179   & 123   & 2793 \\
    phone-7-long & 100   & 458   & 485   & 502   & 655   & 438   & 27770 \\
    phone-7-long-repeat & 400   & 1245  & 1253  & 1469  & 1921  & 1230  & 89535 \\
    phone-7\_short & 7     & 162   & 164   & 197   & 199   & 120   & 2762 \\
    phone-8 & 7     & 155   & 162   & 153   & 266   & 114   & 3464 \\
    phone-8-long & 100   & 460   & 453   & 1490  & 538   & 439   & 22961 \\
    phone-8-long-repeat & 400   & 1242  & 1265  & 1483  & 1907  & 1229  & 88932 \\
    phone-8\_short & 7     & 156   & 156   & 146   & 183   & 119   & 3223 \\
    phone-9 & 7     & 30087 & 28603 & 43063 & 67994 & 12823 & - \\
    phone-9-long & 100   & 31448 & 33337 & 48872 & 59942 & 13498 & - \\
    phone-9-long-repeat & 400   & 39881 & 41229 & 56789 & 73652 & 18845 & - \\
    phone-9\_short & 7     & 27348 & 31304 & 43994 & 62737 & 12140 & - \\
    phone-long & 100   & 290   & 285   & 1092  & 329   & 285   & 8592 \\
    phone-long-repeat & 400   & 720   & 752   & 806   & 867   & 766   & 28836 \\
    phone\_short & 6     & 68    & 66    & 68    & 79    & 68    & 963 \\
    reverse-name & 6     & 783   & 732   & 764   & 1186  & 463   & - \\
    reverse-name-long & 50    & 1075  & 1077  & 2120  & 1341  & 764   & - \\
    reverse-name-long-repeat & 200   & 1682  & 1659  & 1912  & 2590  & 1398  & - \\
    reverse-name\_short & 6     & 767   & 774   & 823   & 987   & 440   & - \\
    univ\_1 & 6     & -     & -     & 203455 & -     & 51541 & - \\
    univ\_1-long & 20    & -     & -     & -     & -     & 64459 & - \\
    univ\_1-long-repeat & 30    & 22101 & 22452 & -     & 54745 & 11928 & - \\
    univ\_1\_short & 6     & -     & -     & 202296 & -     & 50989 & - \\
    univ\_3 & 6     & 6500  & 6362  & 5601  & 11083 & 5770  & - \\
    univ\_3-long & 20    & -     & -     & -     & -     & -     & - \\
    univ\_3-long-repeat & 30    & -     & -     & -     & -     & -     & - \\
    univ\_3\_short & 6     & 6697  & 6275  & 4506  & 9127  & 5772  & - \\
   \hline
    \end{tabular}%
    }
    \caption{Runtime performance of \sys~and CFTA over noise-free dataset}
\label{table:nonoise2}
\end{figure}

\end{document}